\newcommand{\neutralize}[1]{\expandafter\let\csname c@#1\endcsname\count@}
\declaretheorem[name=Theorem,parent=section]{theorem}
\declaretheorem[name=Lemma,parent=section]{lemma}
\declaretheorem[name=Assumption, parent=section]{assumption}
\declaretheorem[name=Condition, parent=section]{condition}
\declaretheorem[qed=$\triangleleft$,name=Example,style=definition, parent=section]{example}
\declaretheorem[name=Proposition, parent=section]{proposition}
  \renewenvironment{proof}[1][Proof]%
  {%
   \par\noindent{\bfseries\upshape {#1.}\ }%
  }%
  {\qed\newline}
\theoremstyle{definition}  %
\newtheorem{corollary}{Corollary}[section]
\theoremstyle{plain}
\newtheorem{definition}{Definition}[section]
\xpatchcmd{\proof}{\itshape}{\normalfont\proofnameformat}{}{}
\newcommand{\proofnameformat}{\bfseries}
\DeclareDocumentCommand{\XDeclarePairedDelimiter}{mm}
 {
  \__egreg_delimiter_clear_keys: %
  \keys_set:nn { egreg/delimiters } { #2 }
  \use:x %
   {
    \exp_not:n {\NewDocumentCommand{#1}{sO{}m} }
     {
      \exp_not:n { \IfBooleanTF{##1} }
       {
        \exp_not:N \egreg_paired_delimiter_expand:nnnn
         { \exp_not:V \l_egreg_delimiter_left_tl }
         { \exp_not:V \l_egreg_delimiter_right_tl }
         { \exp_not:n { ##3 } }
         { \exp_not:V \l_egreg_delimiter_subscript_tl }
       }
       {
        \exp_not:N \egreg_paired_delimiter_fixed:nnnnn 
         { \exp_not:n { ##2 } }
         { \exp_not:V \l_egreg_delimiter_left_tl }
         { \exp_not:V \l_egreg_delimiter_right_tl }
         { \exp_not:n { ##3 } }
         { \exp_not:V \l_egreg_delimiter_subscript_tl }
       }
     }
   }
 }
\XDeclarePairedDelimiter{\supnorm}{
  left=\lVert,
  right=\rVert,
  subscript=\infty
  }
\theoremstyle{plain}
\def\EE{\mathbb{E}}
\def\PP{\mathbb{P}}
\def\RR{\mathbb{R}}
\def\bP{\mathbf{P}}
\def\bR{\mathbf{R}}
\def\bY{\mathbf{Y}}
\newcommand{\wt}[1]{\widetilde{#1}}
\renewcommand\l{\lambda}
\newcommand\diag{\textup{diag}}
\def\1{\mathbbm{1}}
\def\var{\mathsf{Var}}
\def\cov{\mathsf{Cov}}
\newcommand{\argmax}{\mathop{\mathrm{argmax}}}
\theoremstyle{plain}
\def \cF {\mathcal{F}}
\def \bP {\mathbb{P}}
\def \bR {\mathbb{R}}
\def \cN {\mathcal{N}}
\def \var {\mathsf{Var}}
\newcommand{\TV}{{\sf TV}}
\definecolor{myblue}{rgb}{.8, .8, 1}
\definecolor{mathblue}{rgb}{0.2472, 0.24, 0.6} %
\definecolor{mathred}{rgb}{0.6, 0.24, 0.442893}
\definecolor{mathyellow}{rgb}{0.6, 0.547014, 0.24}
\pgfplotsset{compat=1.17}
\def\smskip{\vskip 5 pt}
\def\medskip{\vskip 10 pt}
\DeclarePairedDelimiter{\abs}{\lvert}{\rvert} %
\DeclarePairedDelimiter{\brk}{[}{]}
\DeclarePairedDelimiter{\prn}{(}{)}
\DeclarePairedDelimiter{\norm}{\|}{\|}
\DeclarePairedDelimiter{\inner}{\langle}{\rangle}
\DeclarePairedDelimiter{\set}{\{}{\}}
\newcommand{\muhat}{\hat{\mu}}
\newcommand{\En}{\mathbb{E}}
\newcommand{\cR}{\mathcal{R}}
\renewcommand{\d}{\textnormal{d}}
\newcommand{\ldef}{\vcentcolon=}
\newcommand{\eps}{\epsilon}
\newcommand{\unif}{\mathrm{Unif}}
\newcommand{\gammahat}{\hat{\gamma}}
\newcommand{\revindent}[1][1]{\hspace{#1in}&\hspace{-#1in}}
\newcommand{\pref}[1]{\prettyref{#1}}
\newcommand{\pfref}[1]{Proof of \Cref{#1}}
\newcommand{\savehyperref}[2]{\texorpdfstring{\hyperref[#1]{#2}}{#2}}
\newcommand{\mustar}{\mu_\star}
\newcommand{\QMW}{\mathrm{QMW}}
\newcommand{\MW}{\mathrm{MW}}
\newcommand{\gmstar}{\gamma^\star}
\newcommand{\supp}{\mathrm{supp}}
\newcommand{\Flip}[1][L]{\cF_{\mathrm{Lip},#1}}
\newcommand{\mubar}{\bar{\mu}}
\newcommand{\phistar}{\phi^\star}
\newcommand{\phitil}{\wt{\phi}}
\newcommand{\veps}{\varepsilon}
\newcommand{\lto}{\leftarrow}
\newcommand{\vz}{\mathbf{0}}
\newcommand{\err}{\mathrm{err}}
\newcommand{\KL}{\mathrm{KL}}
\newcommand{\MTV}{\mathrm{MTV}}
\newcommand{\gmhat}{\gammahat}
\newcommand{\mutil}{\wt{\mu}}
\newcommand{\multiline}[1]{\parbox[t]{\dimexpr\linewidth-\algorithmicindent}{#1}}
\newcommand{\cX}{\mathcal{X}}
\newcommand{\cA}{\mathcal{A}}
\newcommand{\cE}{\mathcal{E}}
\let\oldparagraph\paragraph
\renewcommand{\paragraph}[1]{\oldparagraph{#1.}}
\title{Bridging multiple worlds: multi-marginal optimal transport for causal partial-identification problem}
\author{%
    Zijun Gao\\
    USC Marshall\\
    {\small\texttt{zijungao@marshall.usc.edu}}
    \and
    Shu Ge\\
    MIT EECS\\
    {\small\texttt{geshu@mit.edu}}
    \and
    Jian Qian\\
    MIT EECS\\
    {\small\texttt{jianqian@mit.edu}}
}
\date{}
\begin{document}

\maketitle

\begin{abstract}
Under the prevalent potential outcome model in causal inference, each unit is associated with multiple potential outcomes but at most one of which is observed, leading to many causal quantities being only partially identified.
The inherent missing data issue echoes the multi-marginal optimal transport (MOT) problem, where marginal distributions are known, but how the marginals couple to form the joint distribution is unavailable.
In this paper, we cast the causal partial identification problem in the framework of MOT with $K$ margins and $d$-dimensional outcomes and obtain the exact partial identified set. 
In order to estimate the partial identified set via MOT, statistically,
we establish a convergence rate of the plug-in MOT estimator for the $\ell_2$ cost function stemming from the variance minimization problem and prove it is minimax optimal for arbitrary $K$ and $d \le 4$. We also extend the convergence result to general quadratic objective functions.
Numerically, we demonstrate the efficacy of our method over synthetic datasets and several real-world datasets where our proposal consistently outperforms the baseline by a significant margin (over 70\%).
In addition, we provide efficient off-the-shelf implementations of MOT with general objective functions. \looseness=-1
\end{abstract}

\section{Introduction}\label{sec:introduction}

The potential outcome model \parencite{rubin1974estimating} has been extensively used to perform causal inference.
Suppose there are $K$ possible treatment levels,
and each unit $i$ is associated with $K$ potential outcomes $\bY_i(\cdot) = (Y_i(1), \ldots, Y_i(K))$, where each potential outcome $Y_i(k) = (Y_i^1(k), \ldots, Y_i^d(k))$ is a $d$ dimensional vector drawn independently\footnote{In addition to the super-population scenario, we discuss finite sample analysis in \Cref{eg:design-based-inference}.} from an unknown joint distribution $\mustar$.
Causal questions focus on comparing these potential outcomes to evaluate the treatment effect.

The fundamental challenge of causal inference is that we can only observe the potential outcome $Y_i = Y_i(W_i)$ at the realized treatment level $W_i \in \{1,\ldots, K\}$, and it is impossible to observe the entire set of $K$ potential outcomes.
As a consequence, only the marginal distribution $\mustar(k)$ of $Y_i(k)$, $1 \le k \le K$ can be identified and how the marginals $\mustar(k)$ couple to form the full joint distribution $\mustar$ is not identifiable.
This renders a large class of causal estimands\footnote{
Only causal estimands that can be expressed as the expectation of the sum of univariate functions of a single potential outcome can be identified.} relying on the joint distribution $\mustar$ to be only partially identifiable, meaning that the causal estimands can not be precisely determined but can only be confined to a range of plausible values based on the available data.
For partially-identified causal estimands, the statistical task is determining the identified set: the full set of values that are compatible with the observed data.

In the following, we provide an example of a causal quantity that could be only partially identified and whose partially identified set hasn't been fully characterized in the literature. 

\subsection{Motivating example}\label{eg:design-based-inference}

To illustrate the concept of partial identifiability, we revisit the classic Neymanian confidence interval \parencite{neyman1923application}.
Consider $d = 1$ and the average contrast effect estimand ${\tau}_{\boldsymbol{\beta}} :=  
\sum_{k=1}^K \beta_k \bar{Y}(k)$, with $\bar{Y}(k) = \sum_{i=1}^n {Y}_i(k)/n$ and pre-fixed coefficients $\beta_k \in \RR$. Suppose $W_i$ are drawn from a completely randomized experiment\footnote{A completely randomized experiment involves a random sample of size $n_k$ chosen without replacement from a finite sample of size $n$.} with $n_k$ units at the treatment level $k$, then
the Neymanian confidence interval for ${\tau}_{\boldsymbol{\beta}}$ takes the form \parencite{lu2016randomization},
\begin{align}\label{eq:variance.design.based}
\begin{split}        
    V_{\boldsymbol{\beta}} :=\sum_{k=1}^K  \frac{\beta_k^2 S_k^2}{(n/K)} - \frac{S_{\tau}^2}{n}, 
    \quad S_k^2 = \frac{\sum_{i=1}^n (Y_i(k) - \bar{Y}(k))^2}{n-1}, 
    S_{\tau}^2 = \frac{\sum_{i=1}^n (\sum_{k=1}^K \beta_k Y_{i}(k) - \tau_{\boldsymbol{\beta}})^2}{n-1}.  
\end{split}
\end{align}
The variances $ S_k^2 $ can be estimated using the empirical variance of units with $ W_i = k $. However, the variance $ S_{\tau}^2 $, remains unidentifiable because the outcomes $ Y_i(k) $ for different $ k $ are not simultaneously observable for the same individual. The conventional estimator of the variance $ V_{\boldsymbol{\beta}} $ substitutes $ S_k^2 $ with the empirical variance and replaces $ S_{\tau}^2 $ by a trivial lower bound zero. The conventional variance estimator is upward-biased and thus induces an over-conservative confidence interval.

For cases where $ K = 2 $ (one treatment level and one control level), \cite{aronow2014sharp} provide the sharp lower bound for $ S_{\tau}^2 $. 
Our proposed method below can be used to generalize the result to arbitrary $K$ levels to offer a sharp lower bound of $ S_{\tau}^2 $ as well as the narrowest confidence intervals. 
Practically, a narrower confidence interval often indicates that a smaller sample size, such as less patients to involve in a clinical trial, could be sufficient to make a discovery with statistical significance, thereby potentially reducing the associated costs.

\subsection{Our proposal}

This paper addresses the partial identification problem under the potential outcome model using Multi-margin Optimal Transport (MOT).
Optimal transport, particularly MOT, is a fast-growing research area partly due to the emerging applications in machine learning including networks (GANs) \parencite{choi2018stargan}, domain adaptation \parencite{hui2018unsupervised}, and Wasserstein barycenters \parencite{agueh2011barycenters}.
The MOT problem considers a cost (objective) function of random variables with known marginal distributions and seeks to find the coupling of the marginal distributions that minimizes the expected cost \parencite{pass2015multi}. 
In addition, MOT outputs the exact lower bound of the expected cost among all joint distributions obeying the marginal distributions, even if the optimal coupling is not unique.

There is a natural connection between the partially identified set and MOT: the marginal distribution of each potential outcome is identifiable but the joint distribution can not be uniquely determined.
For a causal estimand, there is an MOT problem whose optimal objective value corresponds to the lower (upper) limit of the causal quantity that can be achieved by some joint distribution respecting the identifiable marginals.
Particularly, let $\Gamma$ represent the collection of joint probabilities respecting the marginal distributions $\mustar(k)$, $1 \le k \le K$. 
For a causal estimand of the form $\mathbb{E}_{\mustar}[\ell(\bm Y_i(\cdot))]$, where $\ell$ can depend on all potential outcomes, the identified set can be expressed as
\begin{align*}
    \left\{\mathbb{E}_\gamma[\ell(\bm Y_i(\cdot))]: \gamma \in \Gamma \right\}.
\end{align*}
Since $\Gamma$ is convex by definition and the estimand $\mathbb{E}_\gamma[\ell(\bm Y_i(\cdot))]$ is linear in $\gamma$, the identified set is always an interval\footnote{We allow the interval to include $-\infty$ and $\infty$. We also allow the interval to be degenerate (only containing a single point), or even empty.}
\begin{align}\label{eq:interval}
    \left[\inf_{\gamma \in \Gamma}\mathbb{E}_\gamma[\ell(\bm Y_i(\cdot))],
    ~\sup_{\gamma \in \Gamma} \mathbb{E}_\gamma[\ell(\bm Y_i(\cdot))]\right].
\end{align}
The task of determining the identified set
reduces to calculating the lower and upper limits, which can be obtained by solving two MOT problems.
See \Cref{fig:diagram} for visualization.

\begin{figure}[h]
        \centering
    \begin{minipage}{0.96\textwidth}
                \centering
                \includegraphics[clip, trim = 0cm 1cm 0cm 1cm, width = \textwidth]{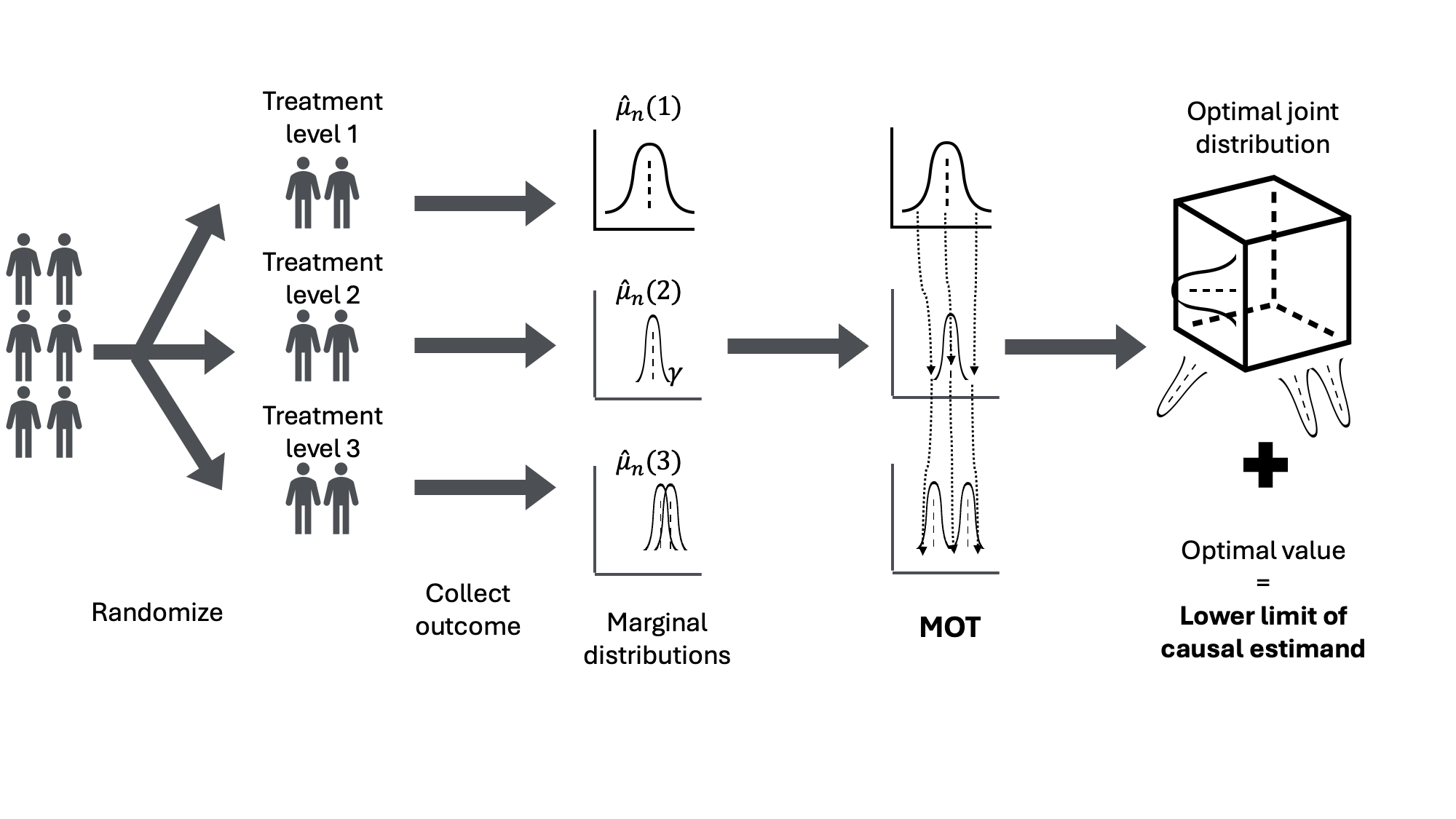}
        \end{minipage}
        \caption{Diagram of solving partial identification problem via MOT.
        A treatment with multiple levels is applied to a group of individuals. The outcomes of each treatment group are collected to form the empirical marginal distribution denoted by $\hat{\mu}_n(k)$, $1 \le k \le K$.
        An MOT problem with the causal estimand as the objective function and empirical marginal distributions as the constraints are solved to obtain the lower limit of the partially identified set. 
        The upper limit can be obtained similarly by flipping the sign of the MOT objective function.
        }
    \label{fig:diagram}
\end{figure}

We illustrate the connection between MOT and the partial identification problem using the motivating example in \Cref{eg:design-based-inference}. For the partially identified $ S_{\tau}^2 $, the loss function associated is $ \ell(\bm{Y}_i(\cdot)) := (n/(n-1)) (\sum_{k=1}^K Y_i(k))^2 - \tau_{\beta}^2 $. Here $ \tau_{\beta} $ is identifiable, and 
the family of joint distributions $ \Gamma $ includes all the distributions whose $k$-th marginal distribution equals the distribution of $ Y_i(k)$. 
Determining the lower bound of $ S_{\tau}^2 $ is thus equivalent to minimizing $ \EE_{\gamma}[\ell(\bm{Y}_i(\cdot))]$ for $\gamma \in \Gamma$, which becomes a standard MOT problem.
Theoretically, for this specific case, we show that the plug-in estimator (which computes the MOT value of the empirical marginal distributions) for the MOT optimal objective value converges at a rate of $O(n^{-1/2})$. 
Empirically, we apply our proposal to the Expitaxial layer growth dataset \parencite{wu2011experiments} in \Cref{sec:empirical}.
Our variance estimator using the sharp lower bound of $S_\tau^2$ sees a reduction of $8.16$ percent compared to the conventional variance estimator (lower bounding $S_\tau^2$ by zero), effectively decreasing the number of units or repeats required by the same percentage.

\subsection{Contributions}\label{sec:contribution}
\begin{enumerate}

    \item Methodologically, our proposal recovers the exact partially identified set given the true marginal distributions. 
    Our proposal can accommodate various scenarios including arbitrary objectives (beyond quadratic), multiple treatment levels, multiple treatments, and multiple outcomes (see \Cref{sec:setup} for examples).
    Concretely, we demonstrate that our proposal can be used to detect contrast treatment effect and interaction effect, analyze the covariance between different treatment effects, and shorten Neymanian confidence intervals (\Cref{eq:variance.design.based}).

    \item Theoretically, we analyze the plug-in estimator associated with the MOT problem where the true marginal distributions are replaced by the empirical marginal distributions.
    In particular, we focus on the $\ell_2$ norm of the average potential outcomes as the objective. We show the plug-in estimator converges at the parametric rate $O(n^{-1/2})$ up to logarithmic terms for $d\leq 4$, which is surprisingly independent of the number of margins $K$. This is complemented with a matching lower bound, demonstrating that the plug-in estimator is minimax optimal for the particular case. 
    The proof techniques involved,  as elaborated in \Cref{sec:theory}, could be of independent interest. 
    The convergence of general quadratic objectives is also provided as an extension of \cite{chizat2020faster}.

    \item Implementation-wise, we provide several solvers\footnote{The implementation can be accessed at \url{https://github.com/ShuGe-MIT/mot_project}.
    } of MOT applicable to arbitrary objectives (not limited to quadratic objectives), addressing the lack of efficient off-the-shelf MOT packages.  
     A notable feature is that our algorithm 
     always provides valid, albeit conservative, partially identified sets. 
    On synthetic datasets, we compare our method to a case where the closed-form solution of the optimal values is known, showing that our estimates closely match the true optimal values, and in a scenario where the exact solution is unknown, we demonstrate that our method's bounds are valid for the true value of the causal estimand (see details in \Cref{sec:empirical}).
    On real datasets, we validate our proposal on the aforementioned applications
    and showcase that our method achieves an improvement of approximately $70\%$ or more compared to a baseline approach\footnote{
    The causal estimands in our applications under no model assumptions have not been thoroughly investigated in the literature and the baseline bound used is the most relevant counterpart to compare with to the authors' knowledge.} detailed in \Cref{sec:setup}.

\end{enumerate}

\paragraph{Notations} We summarize the notations in \cref{app:notation}.

\subsection{Related works}

\subsubsection{Partial-identification}
In the literature of causal inference, identifiability is typically the first issue addressed when a new causal estimand is proposed. 
Under the potential outcome model, causal estimands depending on cross-world potential outcomes are generally not identifiable due to the inherent missingness.
In this case, the range of parameter values that are compatible with the data is of interest.
A multitude of works have been dedicated to finding causal estimands that can be identified, where a set of assumptions is usually necessary.
For instance, the identifiability of the average treatment effect requires the assumption of no unobserved confounders, and using instrumental variables to identify the local treatment effect requires the exogeneity \parencite{imbens2015causal}.
However, verifying these assumptions can be challenging or even impossible based solely on the data, and the violation of the assumptions will render the parameter unidentified.

In the literature of econometrics, the partial identification problem has been a topic of heated discussion for decades (see \cite{kline2023recent} for a comprehensive review and references therein). 
The works generally begin with a model depending on the parameter of interest and possibly other nuisance parameters or functions.
The partially identified sets are specified to be the values that comply with a set of moment inequalities, parameters that maximize a target function, or intervals with specifically defined endpoints. 
In causal inference, there is a tendency to avoid heavy model assumptions imposed on the potential outcomes to ensure the robustness and generalizability of causal conclusions.
Methods therein often involve leveraging non-parametric or semi-parametric methods that make few structural assumptions about the data-generating process \parencite{bickel1993efficient, van2000asymptotic, laan2003unified}.

We detail two particularly relevant threads of works among the studies on partial identification problems.
Copula models \parencite{nelsen2006introduction} are used to describe the dependency between multiple random variables with known marginal distributions, which also aligns with the potential outcome model with marginals accessible and the coupling unknown.
The Fréchet–Hoeffding copula bounds can be used to characterize the joint distribution of the potential outcomes \parencite{heckman1997making, manski1997mixing, fan2010sharp}. However, when dealing with more than two margins, one side of the Fréchet–Hoeffding theorem's bound is only point-wise sharp; moreover, copula models are generally constrained to unidimensional random variables; in addition, the bounds on the joint distribution do not necessarily translate to those of causal estimands, such as the variance of the difference of two potential outcomes. 
The second thread of related works explicitly employs the optimal transport
to address the partial identification problem in econometrics and causal inference. 
\cite{galichon2018optimal}  considers a model that includes observed variables and latent variables, and the parameter of interest regulates the distribution of the latent variables and the relationship between the observed and unobserved variables, which differ from our problem formulation.
\cite{balakrishnan2023conservative} introduce the quadratic and differential effects for continuous treatments (infinite levels of treatment) that can be considered as examples of the lower limits of our formulation \eqref{eq:interval} with a quadratic objective function.

    \subsubsection{MOT}
    \paragraph{MOT and minimax estimation} 
    The convergence of the empirical optimal transport cost (with two margins) to its population value has been extensively studied and well-understood (see \cite{chizat2020faster,staudt2023convergence,manole2024sharp} and the references therein). However, the estimation error of the plug-in estimator, which is the empirical optimal transport cost, for general loss functions remains a long-standing open problem that is only partially understood under strong assumptions \parencite{staudt2023convergence,manole2024sharp}. Notably, it is known to be minimax optimal for the Wasserstein $p$-cost.
    However, to the best of our knowledge, such minimax results have not been established for the MOT problem. 
    In this paper, we generalize the approach by \cite{chizat2020faster} to provide a convergence result for the plug-in estimator of the MOT problem for arbitrary quadratic objective functions (Eq.~\eqref{eq:quadratic.type}). Furthermore, we refine the analysis for the average squared $\ell_2$ norm (Eq.~\eqref{eq:quadratic.objective}) and show that the plug-in estimator is minimax optimal when the dimension $d\leq 4$. Determining the minimax rates for cases where $d>4$, and for arbitrary quadratic functions or arbitrary cost functions remains an interesting direction for future research.

    \paragraph{Computation for MOT} 
    For computing Optimal Transport (OT) distance, the Sinkhorn algorithm \parencite{cuturi2013sinkhorn} and its accelerated variants (see \cite{altschuler2017near,lin2019efficient,xie2020fast} and references therein) remain the state-of-the-art approach. Extensions of the Sinkhorn algorithm to the MOT problem have been proposed in the literature \parencite{peyre2019computational,tupitsa2020multimarginal,lin2022complexity}. Notably, \cite{tupitsa2020multimarginal,lin2022complexity} provide finite time convergence guarantees for the Multi-marginal Sinkhorn and Primal-Dual Accelerated Alternating Minimization (PD-AAM) algorithms, respectively. While the implementations of OT algorithms are quite advanced \parencite{flamary2021pot}, those for MOT algorithms are still in the early stages. Specifically, \cite{tupitsa2020multimarginal} offers implementations that work with $K=4$, $n=60$ and accuracy $\eps = 1/80$. Since our tasks require concrete lower bounds for the MOT problem at a considerably larger scale of sample size $n$ and higher accuracy $\eps$, we have implemented several algorithms (see \Cref{sec:implementation} for details), used them to obtain our numerical results, and provided an efficient off-the-shelf MOT package.

\paragraph{Organization} The paper is organized as follows. 
In \Cref{sec:setup}, we fix notations and frame various causal partially identified set problems as a MOT problem.
In \Cref{sec:theory}, we investigate the estimation accuracy of the plug-in estimator of the MOT problem. 
In \Cref{sec:empirical}, we discuss implementations and application results on real world datasets.
In \Cref{sec:discussion}, we conclude the paper by discussing potential research directions stemming from the current work.

\section{MOT formulation of partial identification problem}\label{sec:setup}

We recall briefly the relevant notations. We denote by $\bm Y_i(\cdot)$ the potential outcomes of unit $i$. At treatment level $k$, the potential outcome is denoted by $Y_i(k) = (Y_i^1(k),...,Y_i^d(k)) \in \RR^d$. Let $\mustar$ be the joint distribution that are only partially identifiable and $\mustar(k)$ be its $k$-th margin for $k\in [K]$. 
Without loss of generality, we suppose the marginals $\mustar(k)$ for any $k\in [K]$ are supported on $B_1^d(0)\subset \bR^d$ which is the $d$-dimensional unit ball.

We explore three types of objectives in the increasing order of specificity (\Cref{tab:summary}).

\begin{table}[htp]
\centering    
\caption{Summary of results for different objective functions.
}
\label{tab:summary}
\begin{tabular}{l|p{5cm}p{3.5cm}}
\toprule
Objective function & Theoretical properties  & Computation \\ \midrule
Arbitrary quadratic function & Upper bound (\Cref{prop:quadratic-form-upper-bound-simplified})  &   
\multirow{2}{3.5cm}{\Cref{alg:main-alg} with provable convergence}     
\\
Average squared $\ell_2$ norm & Upper bound (\Cref{thm:main-upper-bound}), lower bound (\Cref{prop:minimax-lower-bound})  &  \\ \bottomrule
\end{tabular}
\end{table}

\subsection{Arbitrary function}\label{sec:arbitrary.function}
In \Cref{sec:introduction}, we described the causal partial identification problem where the parameter of interest is the expectation of an arbitrary objective function $\ell$. 
For the two-margin optimal transport with general objective function, the convergence of the empirical optimal transport cost has been established for smooth objective functions with compact supports \parencite{hundrieser2022empirical}, objectives of the form $\ell(y(2) - y(1))$ \parencite{manole2024sharp}, or under bounded moment assumptions \parencite{staudt2023convergence}.
To the authors' knowledge, the convergence results for general MOT cost has not been explicitly stated.
For the rest of the paper, we will focus on quadratic objective functions that possess favorable statistical properties and admit various applications.

\subsection{Arbitrary quadratic function}

Quadratic objective functions are practically useful (as demonstrated in the examples below) and the estimation error analysis can be helped by the quadratic structure.
Let $A \in \RR^{K \times K}$ be a matrix and recall that $\Gamma(\mustar(1),...,\mustar(K))$ denotes the set of all possible couplings of $\mustar(1),...,\mustar(K)$, i.e., distributions $\gamma$ where the $k$-th margin $\gamma(k)$ is equal to $\mustar(k)$ for all $k\in [K]$.
We define the Quadratic Multi-marginal Wasserstein ($\QMW$) loss parametrized by $A$ as
\begin{align}\label{eq:quadratic.type}
    \QMW(\mustar; A) = \QMW(\mustar(1),\dots,\mustar(K); A) := \inf_{\gamma \in \Gamma(\mustar(1),\dots,\mustar(K))}  
    \int y^\top A y ~\gamma(dy),
\end{align}
where $y^\top A y$ is defined as $y^\top A y\ldef \sum_{1\leq i,j\leq K} a_{ij} y(i)^\top y(j) $.

\subsection{Average squared $\ell_2$ norm}

Specifically, we focus on the following quadratic form which is flexible enough to encompass most of our examples of interest (\Cref{eg:design-based-inference,eg:interaction-effect}) as well as objectives in existing literature detailed below. Concretely, we define Multi-marginal Wasserstein ($\MW$) loss as
\begin{align}\label{eq:quadratic.objective}
    \MW_2^2(\mustar) = \MW_2^2(\mustar(1), \ldots, \mustar(K)) := \inf_{\gamma \in \Gamma(\mustar(1),\dots,\mustar(K))} \int \left\| \frac{\sum_{k=1}^{K} y(k)}{K} \right\|_2^2  ~\gamma(dy),
\end{align}
where the objective function in Eq.~\eqref{eq:quadratic.objective} corresponds to the integral of the squared $\ell_2$ norm of the average potential outcome $(\sum_{k=1}^{K} y(k))/K$.
This objective is a special case of in \eqref{eq:quadratic.type} with $A = (a_{ij}=1/K^2)_{1\leq i,j\leq K}$.

The quantity $\MW_2^2(\mustar)$ has multiple connections to the existing literature. For instance, the expected sum of pairwise squared Wasserstein distance $\sum_{k \neq k'} \EE_{\mustar}[(Y_i(k) - Y_i(k'))^2]$ is related to the repulsive harmonic cost arising from the weak interaction regime in Quantum Mechanics~\parencite{di2015optimal} and can be written as $K^2 \MW_2^2(\mustar)$ subtracted from the estimable quantity $2 K \sum_{k=1}^K \EE_{\mustar(k)} [(Y_i(k))^2]$. 
Another example is variance minimization problem~\parencite{ruschendorf2013mathematical}, where the minimum variance can be written as $K^2 \MW_2^2(\mustar)$ subtracting the estimable quantity $\prn{\sum_{k=1}^K \EE_{\mustar(k)} \brk{Y_i(k)} }^2$. Lastly, for the heritability detailed in \Cref{eg:heritability}, 
the upper bound of the metric can be obtained by solving a modified version of
Eq.~\eqref{eq:quadratic.objective} incorporating the weights $\PP(W_i = k)$.

We describe a non-tight lower bound of the objective Eq.~\eqref{eq:quadratic.objective}, which serves as the baseline to compare with.
The objective Eq.~\eqref{eq:quadratic.objective} admits the following decomposition (\Cref{lem:mean-decomposition}),
\begin{align}\label{eq:decomposition}
\MW_2^2(\mustar) = \left\| \frac{\sum_{k=1}^{K} \EE_{\mustar(k)} [Y_i(k)]}{K}  \right\|_2^2 + \MW_2^2(\bar{\mu}_\star),
\end{align}
where $\bar{\mu}_\star$ denotes the centered counterpart of ${\mu}_\star$, that is, $\bar{\mu}_\star(k)$ is a translation of $\mustar(k)$ by $-\En_{\mustar(k)}\brk*{Y_i(k)}$. 
The first term on the right-hand side of Eq.~\eqref{eq:decomposition} depends solely on the first moment of each margin, and the second term contains the information of higher-order moments.
Since $\MW_2^2(\bar{\mu}_\star)$ is non-negative, the first term is a straightforward lower bound of $\MW_2^2(\mustar)$, which we refer to as the baseline lower bound.
When $\MW_2^2(\bar{\mu}_\star)$ is zero, i.e., $\mustar(k)$ satisfying the joint mixability property \parencite{wang2013bounds}, the baseline lower bound is adequate.
When the marginal distributions differ in second or higher-order moments, the baseline is not sufficient and our method offers a potentially significantly tighter lower bound (see \Cref{sec:partial.identification} for real data demonstrations).

\subsection{Examples}
We provide examples of causal partial-identification problems that can be formulated in terms of the objective in 
Eq.~\eqref{eq:quadratic.objective} or \eqref{eq:quadratic.type}. 
\looseness=-1

\begin{example}[Detection of interaction treatment effect]
\label{eg:interaction-effect}
Suppose there are two binary treatments.
We use $Y_i(0,0)$ to denote the potential outcome under control of both treatments, and similarly for $Y_i(0,1)$, $Y_i(1,0)$, $Y_i(1,1)$, and the potential outcomes $Y_i(\cdot) = (Y_i(1,1), -Y_i(1,0), - Y_i(0,1), Y_i(0,0))$ 
possess four margins in total.
Identifying a non-zero interaction treatment effect, i.e., $Y_i(1,1) - Y_i(1,0) - Y_i(0,1) + Y_i(0,0) \neq 0$, is crucial as it can facilitate the detection of drug synergy (whether different drugs' combined effect is greater than the sum of their individual effects).
The tight lower bound of $\EE_{\mustar}[(Y_i(1,1) - Y_i(1,0) - Y_i(0,1) + Y_i(0,0))^2]$ can be solved for as in Eq.~\eqref{eq:quadratic.objective}, and a significantly positive lower bound suggests the existence of interaction effect.
\end{example}

\begin{example}[Detection of contrast effect]\label{eg:contrast-effect}
The method in \Cref{eg:interaction-effect} can also be applied to detect the contrast effect of a treatment with multiple levels.
Suppose there is a treatment with $K$ levels, and the corresponding potential outcomes are associated with $K$ margins. 
Unlike the case of a binary treatment, there are various ways to define an estimand for treatments of multiple levels, and a popular choice is considering a contrast vector $\boldsymbol{\beta} \in \RR^k$, $\sum_{k=1}^K \beta_k = 0$ and the associated contrast treatment effect $
\sum_{k=1}^K \beta_k {Y}_i(k)$.
Similar to the interaction effect, a significantly positive lower bound of $\EE_{\mustar}[(\sum_{k=1}^K \beta_k {Y}_i(k))^2]$ is a sign to reject the null hypothesis of no contrast effect.

We remark that standard methods for testing the existence of treatment effects construct a confidence interval for the treatment effect estimator $\sum_{i=1}^n (n/K)^{-1} \sum_{k=1}^K \beta_k \1_{\{W_i = k\}} Y_i$ \parencite{imbens2015causal}. 
This is testing whether the average contrast treatment effect $\EE_{\mustar}[\sum_{k=1}^K \beta_k {Y}_i(k)]$ is zero but we are testing whether all the individual contrast treatment effect is zero, i.e., $\sum_{k=1}^K \beta_k {Y}_i(k) = 0$ for all $i$. 
\end{example}

\begin{example}[Analysis of variance (ANOVA)]
Analysis of variance (ANOVA) is a standard topic in statistics to determine the importance of different variables, and recent research \parencite{hines2022variable} has focused on using ANOVA to define the importance of various treatments. Particularly, \cite{hines2022variable} use variance to quantify the importance, which is a special case of the quadratic form (with the identifiable mean squared subtracted).
Our work can be used to provide lower and upper bounds of the importance of a treatment.    
\end{example}

\begin{example}[Heritability]\label{eg:heritability}
Heritability is a statistical measure frequently used in genetics to estimate the proportion of variation in a trait among individuals in a population that is due to genetic differences. In this application, the outcome is a biological trait, treatments are genes and environmental factors, and the heritability corresponds to a quadratic objective of our formulation.
One metric to quantify the heritability in genetic studies \parencite{jacquard1983heritability} is defined as $\EE_{\mustar}[\var(Y_i(W_i) \mid \bm Y_i(\cdot))]$, or equivalently $\EE_{\mustar}[Y_i^2(W_i)] - \EE_{\mustar}[(\sum_{k=1}^K Y_i(k) \PP(W_i = k))^2]$. 
This corresponds to the MOT problem with a quadratic objective function.
\end{example}

\begin{example}[Covariance between treatment effects]
\label{eg:cov-between-te}
    The covariance between treatment effects of different dimensions of the outcomes $\cov_{\mustar}(\tau^1_i, \tau^2_i)$ with $\tau_{i}^j = Y_{i}^j(2) - Y_{i}^j(1)$, $j = 1$, $2$ takes the form of Eq.~\eqref{eq:quadratic.type} and its upper and lower bounds can be obtained by solving the corresponding MOT problems.

    By investigating the exact lower and upper bounds of the covariance, we provide a model-free way to detect positively or negatively associated treatment effects (\Cref{sec:covariance}).
    There are many relevant applications in economics, e.g., the effects of a promotion scheme on the sale of complementary goods shall be positively correlated, and social sciences, e.g., the effects of a subsidy program on the full-time employment and part-time employment rate, might be negatively correlated.
\end{example}

\section{Theoretical characterization of the estimation error}\label{sec:theory}

In this section, we consider the minimax estimation error of the value  $\MW_2^2(\mustar)$ and $\QMW(\mustar;A)$ with $n$ empirical samples. Concretely, suppose we have $n$ i.i.d. samples from the distribution $\mustar$ and assigned $n/K$ samples to each treatment level $k$. For simplicity, we assume $n/K$ is an integer. Specifically, suppose we have samples $\set{Y_i(k)}_{i\in [n/K],k\in [K]}$. Let the empirical distribution be $\muhat_n  = (\muhat_n(1),..., \muhat_n(K))$, where $\muhat_n(k) = \unif(\set{Y_i(k)}_{i\in [n/K]})$. In \Cref{sec:upper-bound}, we establish the parametric rate of convergence for $\MW_2^2(\muhat_n)$ to $\MW_2^2(\mustar)$ for $d\leq 4$ and convergence result for $\QMW(\muhat_n;A)$ to $\QMW(\mustar;A)$, respectively. Then we present a matching minimax lower bound for the estimation error of $\MW_2^2(\mustar)$ when $d\leq 4$ in \Cref{sec:lower-bound}.

\subsection{Upper bound}
\label{sec:upper-bound}

In this section, we prove finite sample upper bounds for the convergence of $\MW_2^2(\muhat_n)$ and $\QMW(\muhat_n;A)$ to $\MW_2^2(\mustar)$ and $\QMW(\mustar;A)$, respectively. Similar convergence results have been shown for optimal transport \parencite{chizat2020faster}. 
For the convergence of $\MW_2^2(\muhat_n)$, we first apply the techniques from \cite{chizat2020faster} to obtain a suboptimal (sometimes vacuous) convergence rate and then refine their analysis using more detailed properties of the MOT. Concretely, for $d=1$, applying techniques from \cite{chizat2020faster} yields a convergence rate of $\sqrt{K}/\sqrt{n}$, where $K$ is the number of margins and $n$ is the number of samples. This rate is not ideal as $K$ can range from $0$ to $n$. For $K = \sqrt{n}$, the rate becomes $1/n^{1/4}$, and if $K/n$ is constant, the bound is vacuous. Our key technical contribution is removing this dependence on $K$ through a subtle two-step approach, uncovering a surprising property of the optimal coupling for MOT. We subsequently extend the results of \cite{chizat2020faster} to achieve the convergence of $\QMW(\mustar;A)$.

\begin{theorem}
    \label{thm:main-upper-bound}
    For any $n,d,K$, we have
    \begin{align}\label{eq:MOT.expectation.upper.bound}
    \begin{split}
        \revindent[2] \EE\left[\left|\MW_2^2(\muhat_n) - \MW_2^2(\mustar)\right|\right] \lesssim n^{-1/2} + 
    \prn*{1/K} \wedge
    \begin{cases} 
         K^{-1/2} (n/K)^{-2/d}, & \text{if } d > 4, \\
         n^{-1/2} \log(n/K), & \text{if } d = 4, \\
         n^{-1/2}  , & \text{if } d < 4,
        \end{cases}
    \end{split}
    \end{align}
    where the notation $\lesssim$ hides constants that only depend on the dimension $d$.  In particular, for $d\leq 4$, we have 
    \begin{align*}
        \EE\left[\left|\MW_2^2(\muhat_n ) - \MW_2^2(\mustar)\right|\right] \lesssim \wt{O} (n^{-1/2}).
    \end{align*}
\end{theorem}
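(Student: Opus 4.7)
The plan is to bound the empirical fluctuation of the MOT cost in two stages. First I would use the decomposition in \Cref{eq:decomposition} to split the estimation task into the mean part $\|\sum_k\En_{\mustar(k)}[Y_i(k)]/K\|_2^2$, whose error is a polynomial in the empirical means and is therefore controlled at rate $n^{-1/2}$ by elementary concentration, and the centered MOT value $\MW_2^2(\bar\mu_\star)$, which furnishes the $(1/K)\wedge(\cdots)$ term. The trivial $1/K$ alternative falls out immediately: since the centered margins are supported on $B_1^d(0)$, the product coupling gives $\En\brk*{\|\bar Y\|_2^2}\le 1/K$, so $\MW_2^2(\bar\mu_\star)$ and its plug-in counterpart both lie in $[0,1/K]$.

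For the refined bound I would pass to the Kantorovich dual
$$\MW_2^2(\mu)=\sup\Bigl\{\textstyle\sum_{k=1}^K\int f_k\,d\mu(k)\ :\ \sum_k f_k(y_k)\le \bigl\|\tfrac{1}{K}\sum_k y_k\bigr\|_2^2\Bigr\},$$
which reduces the estimation error to controlling $\sum_k\int f_k^\star\,d(\muhat_n(k)-\mustar(k))$ over near-optimal dual tuples. \emph{Step 1} follows \cite{chizat2020faster}: the cost being a bounded quadratic on $B_1^d(0)$ provides smooth $f_k^\star$ with modulus independent of $K$, and applying a Dudley-type empirical-process bound margin-by-margin yields a suboptimal rate of order $\sqrt K\cdot (n/K)^{-2/d}$ for $d>4$ and $\sqrt K\cdot n^{-1/2}$ (with a $\log$ at $d=4$) for $d\le 4$, producing an undesired $\sqrt K$ factor.

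\emph{Step 2} removes the $\sqrt K$. The key observation is that the cost depends on $(y_1,\ldots,y_K)$ only through the average $\bar y$, so the effective complexity of the dual is $d$-dimensional rather than $Kd$-dimensional. I would exploit this by replacing $(f_k^\star)$ with a feasible surrogate built around a \emph{single} $d$-dimensional potential whose dual value matches the true optimum up to $O(n^{-1/2})$. Substituting this surrogate into the empirical-process bound collapses the sum of $K$ margin-wise fluctuations into the fluctuation of a single $d$-dimensional empirical process, yielding the parametric $n^{-1/2}$ rate (with a $\log(n/K)$ at $d=4$) uniformly in $K$. Combined with the centering decomposition and the trivial $1/K$ bound, this produces the claimed inequality \eqref{eq:MOT.expectation.upper.bound}.

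The main obstacle is rigorously constructing the surrogate. The dual optimum for multi-marginal quadratic OT is highly non-unique and need not itself reduce to a single $d$-dimensional potential shared across the margins, so the argument requires identifying among near-optima one that admits the desired averaged structure, or building the surrogate directly from the primal optimal coupling while verifying dual feasibility up to an $O(n^{-1/2})$ slack. This is the ``surprising property of the optimal coupling'' alluded to in \Cref{sec:contribution} and is the technical crux of the theorem; without it the analysis is confined to the naive margin-wise bound that is vacuous as soon as $K$ is comparable to $n$.
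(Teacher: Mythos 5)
Your high-level scaffolding matches the paper: the mean decomposition to reduce to centered margins, the trivial $1/K$ bound from the product coupling, the Kantorovich dual with Lipschitz potentials in the style of Chizat et al.\ to get a first suboptimal rate with an extra $\sqrt K$, and the recognition that the crux is a special structural property of the optimal coupling. Where you diverge, and where the gap lies, is in Step~2.

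Your proposed fix — that the cost depends on $(y(1),\dots,y(K))$ only through $\bar y$, so the dual should ``collapse'' to a single $d$-dimensional potential and the $K$ margin-wise empirical processes should merge into one — is not what happens, and you yourself concede you cannot construct the surrogate. The intuition is misleading: the dual variables are still $K$ distinct functions on $\mathbb{R}^d$ constrained by $\sum_k f_k(y(k)) \le \|\bar y\|^2$, and even if one forced $f_k \equiv f$ the feasibility constraint (take all $y(k)=y$) would cap $f(y)\le \|y\|^2/K$, which does not in general recover $\MW_2^2$. More to the point, the rate improvement in the paper does not come from collapsing the number of empirical processes but from improving the \emph{Lipschitz constant} of the near-optimal dual potentials, each of which is an infimum of affine maps with slopes $\sum_{l\ne k}a_{kl}y(l)$. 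The key lemma (\Cref{lem:finite-support-eta-bound}) shows, by an exchange/rearrangement argument applied to two atoms of the optimal coupling, that for \emph{discretely supported} centered marginals one has $\|\sum_k y(k)\| \lesssim \sqrt K$ pointwise on the support of the optimal coupling; this reduces $\eta_{K,d}$ from the trivial $\sqrt K$ to $O(1)$ and kills the spurious $\sqrt K$ in the rate. Your proposal does not contain this argument or anything equivalent to it.

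You are also missing the second ingredient, without which even the pointwise lemma is insufficient: it holds only for discrete marginals, and $\mustar$ need not be discrete. The paper bridges this with an auxiliary limiting construction — sample $N$ points to form a discrete $\muhat_N$, apply the naive bound to $\MW_2^2(\muhat_N)\to\MW_2^2(\mustar)$ as $N\to\infty$, apply the improved discrete-support bound to resampling $n$ points from $\muhat_N$, and compare with/without-replacement sampling via total variation (vanishing as $N\to\infty$). This step is essential and is absent from your outline. In short: you correctly locate where the difficulty is, but your proposed mechanism for resolving it is different from the paper's, is not carried out, and omits the discrete-to-general bridging argument that makes the paper's mechanism applicable.
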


\begin{proof}[Proof sketch]
The full proof is deferred to \cref{app:theory}. According to the techniques from \cite{chizat2020faster}, to achieve our stated convergence rate, the main challenge is to control $\sup_y\norm{\sum_k y(k)}$ for any point $y=(y(1),...,y(K))$ in the support of the optimal coupling. This is difficult when $\mu_\star(1),...,\mu_\star(K)$ are arbitrary distributions with densities. The optimal coupling will also have a density, and adding any coupled point $y$ with positive density (but zero probability) will not affect the $\ell_2$ cost but will make $\sup_y\norm{\sum_k y(k)}$ uncontrollable. 
We address this difficulty by (1) controlling $\sup_y \norm{\sum_k y(k)}$ in the support of the optimal coupling for discretely supported distributions, and (2) generalizing the convergence rate by bypassing the techniques from \cite{chizat2020faster}.

For discretely supported distributions, the probability mass of any support of the optimal coupling is positive, avoiding issues seen with continuous distributions. If two coupled points $y_1$ and $y_2$ are in the support, any exchange in their arguments will not decrease the coupled $\ell_2$ cost. This is the analogue of the argument for optimal transport, where a change in the optimal transport map will not reduce the transport value. This surprisingly provides control over $\sup_y\norm{\sum_k y(k)}$ for $y$ in the support of the optimal coupling, as shown in \cref{lem:finite-support-eta-bound}. Thus, somewhat unexpectedly, the $\ell_2$ norm minimization forces the optimal coupling to achieve point-wise $\ell_2$ norm guarantees for discretely supported distributions, despite being an average minimization.

The second step is to generalize the convergence to arbitrary distributions, which is challenging as we can't control $\sup_y\norm{\sum_k y(k)}$ generally. To show convergence in total $\ell_2$ cost, we use an imaginary sampling process. Suppose we sample enough points from $\mu_\star(1),...,\mu_\star(K)$ to form empirical distributions $\mu_N(1),...,\mu_N(K)$, with $N$ arbitrarily large. The convergence of $\MW_2^2(\mu_N)$ to $\MW_2^2(\mustar)$ holds with a suboptimal rate by directly applying \cite{chizat2020faster}, but since $N$ is imaginary, we choose it large enough to eliminate the bad dependence on $K$. These empirical distributions are discretely supported, allowing us to achieve the desired rate by sampling from them, albeit with replacement. When $N$ is large enough, sampling $n$ points with replacement approximates sampling $n$ points without replacement (where the latter is equivalent to sampling $n$ points from $\mustar$). Thus concluding our proof.

\end{proof}

\paragraph{General quadratic functions}
The convergence result can be obtained for arbitrary quadratic functions by generalizing the approach from  \cite{chizat2020faster}. 
Concretely, we have the following theorem.
\begin{proposition}
    \label{prop:quadratic-form-upper-bound-simplified}
    For any symmetric matrix $A \in \bR^{K\times K}$, we have
    \begin{align}
    \begin{split}
        \revindent[0] \EE\left[\left|\QMW(\muhat_n;A ) - \QMW(\mustar;A)\right|\right] \\
        &\lesssim 
        K\sqrt{\sum\limits_{k,l}a_{kl}^2}   \cdot  (n)^{-1/2}  
        + \prn*{\sum\limits_{k=1}^{K} |a_{kk}|} \wedge
    \begin{cases} 
        \norm{A}_{1,\infty} K (n/K)^{-2/d}, & \text{if } d > 4, \\
        \norm{A}_{1,\infty} K^{3/2} n^{-1/2} \log(n/K), & \text{if } d = 4, \\
        \norm{A}_{1,\infty} K^{3/2}n^{-1/2}   , & \text{if } d < 4,
        \end{cases}
    \end{split}
    \end{align}
    where the notation $\lesssim$ hides constants that depend on the dimension $d$ and $\norm{A}_{1,\infty} \ldef \sup_{k} \prn*{ \sum_{l}|a_{kl}|} $.
\end{proposition}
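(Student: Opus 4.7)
The plan is to adapt the dual-potential analysis of \cite{chizat2020faster} to the multi-marginal quadratic cost, tracking carefully how the matrix $A$ enters each bound. I would first invoke the Kantorovich dual of \cref{eq:quadratic.type},
\begin{align*}
\QMW(\mu;A) = \sup\left\{ \sum_{k=1}^K \int \phi_k \, d\mu(k) \;:\; \sum_{k=1}^K \phi_k(y(k)) \le y^\top A y \right\},
\end{align*}
and observe, via a coordinate-wise $c$-transform, that an optimal family $(\phi_k)$ may be taken with each $\phi_k$ being $2\|A\|_{1,\infty}$-Lipschitz on $B_1^d(0)$: the partial gradient of $y^\top A y$ with respect to $y(k)$ equals $2\sum_l a_{kl}\, y(l)$ and has norm at most $2\|A\|_{1,\infty}$ when all $y(l)$'s lie in the unit ball, so the envelope defining the $c$-transform inherits this Lipschitz constant.

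Second, using these dual potentials as a common feasible pair at $\mustar$ and $\muhat_n$ yields
\begin{align*}
\left|\QMW(\muhat_n;A) - \QMW(\mustar;A)\right| \lesssim \sum_{k=1}^K \left|\int \phi_k \, d(\muhat_n(k) - \mustar(k))\right|.
\end{align*}
Since each $\phi_k$ is $O(\|A\|_{1,\infty})$-Lipschitz on $B_1^d(0)$ and each marginal has $n/K$ samples, the refined empirical-process rates of Chizat et al.---obtained by smoothing the dual potentials and integrating dyadic metric-entropy bounds on $B_1^d$---yield in expectation $(n/K)^{-2/d}$ for $d>4$, $(n/K)^{-1/2}\log(n/K)$ for $d=4$, and $(n/K)^{-1/2}$ for $d<4$ per summand. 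Multiplying by $\|A\|_{1,\infty}$, summing over $K$ marginals, and converting $(n/K)^{-1/2}$ into $K^{1/2}n^{-1/2}$ produces the displayed $K$ or $K^{3/2}$ prefactor on the dimension-dependent term.

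Third, for the sharper parametric term $K\sqrt{\sum_{k,l}a_{kl}^2}\cdot n^{-1/2}$, I would decompose $y(k) = m_k + z(k)$ with $m_k = \EE_{\mustar(k)}[Y(k)]$, so that $y^\top A y = m^\top A m + 2 m^\top A z + z^\top A z$. The first two pieces depend only on the marginals and, viewed as a quadratic and linear functional of the $K$ empirical marginal means, concentrate at rate $n^{-1/2}$ with a variance factor captured by the Frobenius-type norm $\sqrt{\sum_{k,l} a_{kl}^2}$, via independence of samples across marginals and Cauchy-Schwarz applied to the double sum over $(k,l)$. Finally, the $\wedge \sum_k|a_{kk}|$ arises from the trivial a priori bound on the coupling-dependent residual $\inf_\gamma \int z^\top A z \, d\gamma$, which on $B_1^d(0)$ is at most $\sum_k|a_{kk}|$ in absolute value (up to lower-order mean terms); taking the minimum with the dimension-dependent bound ensures a nonvacuous estimate when $n$ is small compared to $K$, or when $d$ is large.

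The principal obstacle is propagating the matrix-norm dependencies correctly: the dimension-dependent bound must emerge with the factor $\|A\|_{1,\infty}$ (via the coordinate-wise Lipschitz analysis of the dual potentials), whereas the parametric bound must come out with the Frobenius-type $\sqrt{\sum a_{kl}^2}$ rather than a coarser norm such as $\sum|a_{kl}|$. A secondary technical point is that the multi-marginal $c$-transform admits no closed form, so Lipschitz regularity of each $\phi_k$ must be deduced directly from the envelope definition $\phi_k(x) = \inf\{y^\top A y - \sum_{l\ne k}\phi_l(y(l)) : y(k) = x\}$, rather than from the explicit formulas available in the two-marginal case.
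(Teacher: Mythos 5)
Your plan reconstructs the paper's argument faithfully: Kellerer duality and a multi-marginal $c$-transform to obtain $O(\norm{A}_{1,\infty})$-Lipschitz dual potentials, Rademacher/metric-entropy bounds over $B_1^d(0)$ for the dimension-dependent rates, a mean decomposition with Khintchine-type concentration for the parametric $K\sqrt{\sum_{k,l}a_{kl}^2}\,n^{-1/2}$ term, and the trivial $\sum_k|a_{kk}|$ cap. The only cosmetic difference is that the paper routes through the intermediate quantity $\eta_{K,d,A}$ (the sup of $\norm{\sum_{l\neq k}a_{kl}y(l)/\sqrt K}$ over the support of the optimal coupling) and then bounds $\eta_{K,d,A}\lesssim\norm{A}_{1,\infty}/\sqrt K$, whereas you write the $\norm{A}_{1,\infty}$ Lipschitz constant directly---equivalent for this corollary, though the $\eta$ parametrization is what enables the sharper bound for $\MW_2^2$ in the main theorem.
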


\subsection{Lower bound}
\label{sec:lower-bound}

In this section, we show that the minimax lower bound on the convergence for any estimator is $\Omega(n^{-1/2})$ for estimating the MOT problem (Eq.~\eqref{eq:quadratic.objective}). 
This, together with the upper bound shown in \Cref{thm:main-upper-bound}, asserts that when $d\leq 4$, the plug-in estimation is minimax optimal.

\begin{proposition}\label{prop:minimax-lower-bound}
For any $K\geq 2$, any $d\geq 1$, and any estimator $F(\set{Y_i(k)}_{i\in [n/K],k\in [K]})$, we have
\begin{align*}
    \sup_{\mustar}\EE\left[\left| F(\set{Y_i(k)}_{i,k}) - \MW_2^2(\mustar)\right|\right] \gtrsim
    \begin{cases}
        n^{-1/2} & \text{if~}d\leq 4,\\
        K^{-2}(n/K\log (n/K))^{-2/d} \wedge n^{-1/2} & \text{if~}d > 4,
    \end{cases}
\end{align*} 
\end{proposition}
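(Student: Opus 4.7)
The plan is to establish the two terms in the lower bound separately via information-theoretic reductions: Le Cam's two-point method for the parametric $n^{-1/2}$ rate, and a Fano-type packing argument for the nonparametric rate that becomes relevant when $d>4$.

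\emph{Parametric $n^{-1/2}$ rate.} I will construct a one-parameter family of product distributions $\mustar_\theta$ whose $K$ marginals are all copies of a Bernoulli$(\theta)$ distribution embedded in the first coordinate of $\bR^d$. For this family, $\MW_2^2(\mustar_\theta)$ can be computed in closed form: minimizing $\EE[\norm*{\sum_k Y(k)/K}_2^2]$ over couplings with the prescribed Bernoulli marginals reduces to minimizing $\mathrm{Var}(\sum_k Y(k))$ subject to $\EE[\sum_k Y(k)] = K\theta$. The minimum is attained by concentrating $S=\sum_k Y(k)$ on the two integers $\lfloor K\theta\rfloor$ and $\lceil K\theta\rceil$ with probabilities $1-\delta$ and $\delta$, where $\delta = K\theta-\lfloor K\theta\rfloor$, and then placing the $S$ nonzero coordinates uniformly at random. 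This is a valid coupling of Bernoulli marginals and yields the closed form $\MW_2^2(\mustar_\theta) = \theta^2 + \delta(1-\delta)/K^2$. Choosing $\theta_0$ so that $K\theta_0$ is half-integer (so $\delta_0 = 1/2$, achievable for every $K\geq 2$), the derivative $2\theta_0 + (1-2\delta_0)/K = 2\theta_0$ is a positive absolute constant uniformly in $K$. Taking $\theta_1=\theta_0+c/\sqrt{n}$ for small constant $c$, the total KL divergence between $\mustar_{\theta_0}^{\otimes n}$ and $\mustar_{\theta_1}^{\otimes n}$ is $O(c^2)$, while $|\MW_2^2(\mustar_{\theta_0})-\MW_2^2(\mustar_{\theta_1})|=\Omega(1/\sqrt{n})$. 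Le Cam's inequality then yields the $n^{-1/2}$ lower bound for all $K\geq 2$ and $d\geq 1$.

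\emph{Nonparametric rate when $d>4$.} For the second term I plan a Fano-type argument over a packing of smoothly perturbed marginals. Starting from a smooth reference density $\rho$ supported on $B_1^d(0)$, I construct a family $\{\rho_v\}_{v\in V}$, where $\rho_v$ is obtained by adding disjoint bump perturbations of width $h$ and height $\epsilon$ at positions indexed by $v$; the packing $V$ has $|V|\asymp h^{-d}$ elements. Setting all $K$ marginals of $\mustar_v$ equal to $\rho_v$, the main task is to lower bound the gap $|\MW_2^2(\mustar_v)-\MW_2^2(\mustar_{v'})|$ for $v\neq v'$. For identical marginals the optimal coupling concentrates $\sum_k Y(k)$ near its mean, and local perturbations of $\rho$ shift the second moment of this sum; the $1/K^2$ normalization in the objective then dilutes these shifts into the $K^{-2}$ prefactor. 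Balancing the total squared Hellinger distance $n\epsilon^2 h^d$ against $\log|V|\asymp d\log(1/h)$ and optimizing over $h$ yields the claimed $K^{-2}(n/K\log(n/K))^{-2/d}$ rate; the minimum with $n^{-1/2}$ in the bound merely records that the parametric construction is always available.

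\emph{Main obstacle.} The delicate step is the nonparametric branch: unlike the $K=2$ case, the MOT optimal coupling for $K\geq 3$ identical marginals is not given by a closed-form rearrangement, so translating density bumps into quantitative $\MW_2^2$ gaps is not immediate. A promising route is to use the MOT dual formulation
\[
\MW_2^2(\mu_1,\ldots,\mu_K) = \sup\left\{\sum_{k=1}^K \int \phi_k\, d\mu_k \,:\, \sum_{k=1}^K \phi_k(y(k))\,\leq\,\norm*{\tfrac{1}{K}\sum_{k=1}^K y(k)}_2^2\right\},
\]
and argue that the dual potentials optimal for $\rho$ remain near-optimal for every $\rho_v$, so the $\MW_2^2$ difference linearizes into an integral against the bump perturbations that can be evaluated explicitly. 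The packing size and bandwidth are then chosen to simultaneously enforce Fano's information condition and to deliver the required $\MW_2^2$ separation.
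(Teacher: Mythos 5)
Your parametric branch is essentially the same idea as the paper's: a Le Cam two-point argument with $K$ identical Bernoulli marginals. Your version is more elaborate than it needs to be --- you derive a closed form $\MW_2^2(\mustar_\theta)=\theta^2+\delta(1-\delta)/K^2$ via the minimum-variance coupling and then differentiate at a half-integer of $K\theta$. The paper does something lighter: take $Ber(1/2)$ versus $Ber(1/2+\veps)$, lower bound $\MW_2^2$ by the squared mean (via \Cref{lem:mean-decomposition}, which gives $(1/2+\veps)^2 \geq 1/4+\veps$), and upper bound $\MW_2^2$ for the $Ber(1/2)$ case by the explicit anti-monotone coupling $Y(1)=Y(3)=\cdots$, $Y(2)=Y(4)=\cdots=1-Y(1)$ that makes $\sum_k Y(k)$ constant. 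That sidesteps both the variance minimization and the derivative computation. Your construction does work provided you actually exhibit a coupling of $K$ $Ber(\theta)$ margins with $S=\sum_k Y(k)\in\{\lfloor K\theta\rfloor,\lceil K\theta\rceil\}$; ``placing the $S$ nonzero coordinates uniformly at random'' is not the right description --- the standard rotation construction ($Y(k)=\mathbbm{1}\{U\bmod 1\in[k\theta\bmod 1,(k+1)\theta\bmod 1)\}$ for a common $U\sim\unif[0,1]$) gives what you need.

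The genuine gap is in the nonparametric branch for $d>4$. What you propose --- a Fano packing of bump-perturbed densities, then linearizing $\MW_2^2$ via near-optimal dual potentials --- is exactly the part you flag as the ``main obstacle,'' and you do not carry it out. In particular, the claim that the dual potentials optimal for the reference density remain near-optimal for every perturbed density, uniformly over the packing, is precisely where a quantitative argument is missing; without it there is no lower bound on the $\MW_2^2$ separation between packing elements, so Fano has nothing to bite on. The paper avoids this entirely by a reduction: since the problem has $K$ marginals, fix $K-2$ of them trivially and embed a 1-Wasserstein distance estimation problem into the first two; Theorem 22 of \textcite{manole2024sharp} then directly supplies the $K^{-2}(n/K)^{-2/d}$ lower bound (the $K^{-2}$ prefactor coming from the $1/K^2$ normalization of the $\MW_2^2$ objective). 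If you want to salvage your Fano route you would need to prove the linearization step; but the embedding reduction is far cheaper and is what you should look for when the target functional restricts to a well-understood two-marginal OT quantity on a subset of margins.
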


\begin{proof}[Proof sketch]
    The full proof is deferred to \Cref{app:lower-bound}. This proof uses Le Cam's two-point method (Lemma 1 of \cite{yu1997assouad}), where we consider $\mustar(1) =\dots = \mustar(K) = Ber(1/2)$ or $\mustar(1) =\dots = \mustar(K) = Ber(1/2+\veps)$. When $\veps = \Theta(n^{-1/2})$ is small enough, then there is a constant probability any estimator would not be able to differentiate the two cases. 
    Here $Ber(p)$ denotes the Bernoulli distribution with probability parameter $p$. Meanwhile, the divergence between the target values in these two cases scales with $\veps$.  Furthermore, by the lower bound of Wasserstein distance estimation from theorem 22 of \cite{manole2024sharp}, we obtain a lower bound of $K^{-2}(n/K)^{-2/d}$ when $d>4$.
\end{proof}

\section{Numerical experiments}\label{sec:empirical}

\subsection{Implementation}\label{sec:implementation}
There are several algorithms proposed to solve the MOT problem in the literature. For example, Sinkhorn's algorithm \parencite[Section 10.1]{peyre2019computational}, Multi-marginal Sinkhorn (\cite{lin2022complexity}), Primal-Dual Accelerated Alternating Minimization (PD-AAM) (\cite{tupitsa2020multimarginal}), etc. In addition, we also consider the  Multi-marginal Greenkhorn algorithm which is extended from \cite{altschuler2018nearlinear}.
The detail of the Multi-marginal Greenkhorn algorithm and Multi-marginal Sinkhorn algorithm is deferred to Appendix \ref{app:algorithm}.
We provide implementations for all four algorithms, while our numerical results are obtained from MOT Sinkhorn for the following three advantages: (1) This algorithm has a convergence guarantee, in which case we can use the dual objective as a lower bound. We obtained the lower bound for the Epitaxialgrowth dataset in Section \ref{sec:partial.identification} and Section \ref{sec:finite.sample} using this bound.
(2) The algorithm is compatible with the celebrated log sum exponential trick in our implementation to avoid instability due to floating point errors caused by finite machine representation. (3) Even when the algorithm does not converge, during the iteration lower bounds for the target value can be obtained. In particular, the lower bounds in Section \ref{sec:partial.identification} and that in Section \ref{sec:covariance} are obtained using the lower bounds obtained during the iteration without convergence.

\subsection{Synthetic dataset}

\paragraph{1-d Gaussian distributions} We pick $K = 3$ margins, each following a Gaussian distribution with different standard deviations. Here, we generate $200$ data points in each margin ($600$ data points in total), with the standard deviations of the margins being $2$, $0.3$, $0.1$, respectively. The objective function chosen here is $(Y(1) + Y(2) + Y(3))^2/9$. For any $k \in [K]$, if the marginal distributions are Gaussian, i.e. $ \mu(k) = \mathcal{N}(0, \sigma_k^2) $ for $ \sigma_k > 0 $, then we have the following explicit form for the theoretical optimal multi-marginal Wasserstein loss.
\begin{lemma}[\cite{doi:10.1080/03610926.2019.1586937}]
Let $\mu(1)= \mathcal{N}(0, \sigma_1^2),...,\mu(K)=\mathcal{N}(0, \sigma_K^2)$ be $K$ Gaussian distributions. Then we have
\begin{align*}
    \MW_2^2(\mu(1), \ldots, \mu(K)) = \frac{1}{K^2} \left( \left( 2 \max_{k \in [K]} \sigma_k - \sum_{l=1}^K \sigma_l \right) \vee 0 \right)^2.
\end{align*}
\end{lemma}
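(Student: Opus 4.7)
The plan is to rewrite the objective in variance form and separate into a matching lower bound argument and an achievability argument. Because each $\mu(k) = \calN(0,\sigma_k^2)$ has mean zero, under any coupling $\gamma$ the random variable $S = \sum_{k=1}^{K} Y(k)$ satisfies $\EE_\gamma[S] = 0$, so $\EE_\gamma[S^2/K^2] = \var_\gamma(S)/K^2$. Hence the lemma reduces to computing $\inf_{\gamma} \var_\gamma(S)$ over couplings with the prescribed Gaussian margins, and dividing by $K^2$.

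For the lower bound, relabel so that $\sigma_1 = \sigma_{\max}\ldef \max_k \sigma_k$ and write $S = Y(1) + T$ with $T = \sum_{l\ge 2} Y(l)$. Then
\begin{align*}
\var_\gamma(S) &= \sigma_1^2 + \var_\gamma(T) + 2\,\cov_\gamma(Y(1),T) \\
&\ge \sigma_1^2 + \var_\gamma(T) - 2\sigma_1 \sqrt{\var_\gamma(T)} = \bigl(\sigma_1 - \sqrt{\var_\gamma(T)}\bigr)^2,
\end{align*}
by Cauchy--Schwarz on the covariance. Minkowski's inequality in $L^2$ gives $\sqrt{\var_\gamma(T)} \le \sum_{l\ge 2}\sigma_l$. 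When $2\sigma_{\max} > \sum_l \sigma_l$, i.e.\ $\sigma_1 > \sum_{l\ge 2}\sigma_l$, the map $x\mapsto (\sigma_1-x)^2$ is decreasing on $[0,\sigma_1]$, so the bound on $\sqrt{\var_\gamma(T)}$ yields $\var_\gamma(S)\ge (2\sigma_{\max}-\sum_l\sigma_l)^2$; otherwise the trivial bound $\var_\gamma(S)\ge 0$ matches the claimed expression $(\cdot)\vee 0$.

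For the matching upper bound I construct explicit couplings realizing each regime. In the ``antitone'' regime $2\sigma_{\max} > \sum_l \sigma_l$, let $Z\sim \calN(0,1)$ and set $Y(1) = \sigma_1 Z$ and $Y(l) = -\sigma_l Z$ for $l\ge 2$; each margin is correct and $S = (\sigma_1 - \sum_{l\ge 2}\sigma_l)Z$ gives exactly $\var(S) = (2\sigma_{\max}-\sum_l\sigma_l)^2$. In the ``jointly mixable'' regime $2\sigma_{\max}\le \sum_l \sigma_l$, I need to exhibit a coupling with $S = 0$ almost surely. The key observation is that this inequality is precisely the polygon condition: there exist vectors $v_1,\dots,v_K\in\RR^2$ with $\|v_k\| = \sigma_k$ and $\sum_k v_k = 0$ (arrange them head-to-tail as sides of a closed polygon, which is possible since no single side exceeds the sum of the others). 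Let $G\sim \calN(0,I_2)$ and define $Y(k) = \langle v_k, G\rangle$; each $Y(k)\sim \calN(0,\sigma_k^2)$ and $S = \langle \sum_k v_k, G\rangle = 0$ a.s.

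The main obstacle is the polygon construction in the mixable regime: one must recognize that the triangle inequality $\sigma_{\max}\le \sum_{l\ne \argmax}\sigma_l$ is equivalent to the existence of a closed planar polygon with the prescribed side lengths, and that projecting a planar standard Gaussian onto the sides yields jointly Gaussian margins with variances $\sigma_k^2$ summing to zero. Everything else is a routine combination of Cauchy--Schwarz, Minkowski, and the zero-mean reduction, so once this construction is in hand the two bounds match and, dividing by $K^2$, yield the stated formula.
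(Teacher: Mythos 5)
Your proof is correct and complete. Since the paper cites this lemma from an external reference and does not reproduce its proof, you have in effect supplied a self-contained elementary argument for a result the paper takes for granted, which is valuable.

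Your structure is sound. The zero-mean reduction $\EE_\gamma[S^2] = \var_\gamma(S)$ is legitimate because the coupling constraints pin down $\EE_\gamma[Y(k)] = 0$ regardless of how the margins are coupled. The lower bound correctly combines Cauchy--Schwarz on the covariance term with Minkowski's inequality $\sqrt{\var_\gamma(T)} \le \sum_{l\ge 2}\sigma_l$, and the monotonicity of $x\mapsto(\sigma_1-x)^2$ on $[0,\sigma_1]$ is applied only after checking $\sum_{l\ge 2}\sigma_l < \sigma_1$, which holds exactly in the regime $2\sigma_{\max} > \sum_l\sigma_l$. The achievability constructions are also right: the antitone coupling $Y(1) = \sigma_1 Z$, $Y(l) = -\sigma_l Z$ hits the bound exactly when $\sigma_1 \ge \sum_{l\ge 2}\sigma_l$, and in the complementary regime the closed-polygon condition $\sigma_{\max}\le\sum_{l\ne\argmax}\sigma_l$ is indeed equivalent to the existence of planar vectors $v_1,\dots,v_K$ with $\|v_k\|=\sigma_k$ and $\sum_k v_k = 0$ (this holds for all $K\ge 2$, with the $K=2$ and boundary cases degenerating to collinear configurations), and projecting a bivariate standard normal onto these directions produces exactly the required Gaussian margins with $S=0$ almost surely. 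This planar-vector argument is a standard way to establish joint mixability of centered Gaussians and is likely close in spirit to how the cited reference proceeds, so your write-up is a faithful, concise reconstruction rather than a novel route.

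One small polish: you might note explicitly that the minimum over couplings is attained in both regimes (it is, since you exhibit the optimizer), so the infimum in the definition of $\MW_2^2$ is a minimum here. Otherwise nothing is missing.
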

Using the provided standard deviation values from the data generation process, the corresponding theoretical optimal loss can be calculated as $0.28$. The empirical MOT lower bound we get by running our Sinkhorn algorithm is $0.24$. We note that the empirical optimal value we obtained serves as a lower bound for the theoretical optimal value.

\paragraph{2-d Gaussian distributions with known data generating distribution} We consider $3$ margins, each with a $2$-dimensional outcome. We generate $100$ data points $\set{(Y_i^1(1), Y_i^2(1),Y_i^1(2), Y_i^2(2),Y_i^1(3), Y_i^2(3))}_{i\in [100]}$ from the 6-d Gaussian distribution $\cN(\mathbf{0}, \diag( 2,2,0.3,0.3,0.1,0.1))$. Then, we separate each data point $i$ into three margins, each being 2-dimensional, that is $(Y_i^1(1), Y_i^2(1))$, $(Y_i^1(2), Y_i^2(2))$, and $(Y_i^1(3), Y_i^2(3))$.
The objective we choose in this case is $\sum_{j \in \{1,2\}} (Y^j(1) - Y^j(2)/2 - Y^j(3)/2)^2$.

The exact expression for the theoretical optimal objective function is not accessible for $d > 1$. Therefore, we compare the empirical optimal objective to the known true value to assess whether it serves as a valid lower bound. The theoretical expectation of the objective is calculated to be $4.20$, while the empirical optimal objective, obtained by running our Sinkhorn algorithm, is $1.78$, confirming that the empirical optimal objective is indeed consistent with the true value of the causal estimand.

\subsection{Real data applications}

\subsubsection{Detection of treatment effect}\label{sec:partial.identification}

Continued from \Cref{eg:interaction-effect}, we compute tight lower bounds for causal estimands taking the form $\EE_{\mustar}[(\sum_{k\in[K]}\beta_kY_i(k))^2]$ on three real datasets (results are summarized in Table \ref{tb:ht}). We explain the three experiments in detail below.
For details of the datasets, please refer to \Cref{sec:datasets}.

We first consider the estimand $\EE_{\mustar}[(Y_i(1,1) - Y_i(1,0) - Y_i(0,1) + Y_i(0,0))^2]$ related to interaction effects.
We use the Epitaxial Layer Growth data  \parencite{wu2011experiments}, which investigates the impact of experimental factors on the thickness of growing an epitaxial layer on polished silicon wafers. 
In particular, we focus on the interaction of two binary treatments: susceptor's rotation and nozzle position, and there are 6 data points per treatment combination.
Using the Multimarginal Sinkhorn algorithm with error margin $\epsilon = 0.001$, the algorithm converges, and we get the lower bound to be $0.0460$. 
We remark that the lower bound we get is $102.8 \%$ higher than the baseline lower bound.

Next, we consider the estimand $\EE_{\mustar}[\|Y_i(1,0)/2 + Y_i(0,1)/2 - Y_i(0,0)\|_2^2]$ related to contrast effects.
We use the Helpfulness data simulated from \cite{jia2022empathy}, which investigates factors influencing individuals' empathy.
In particular, we focus on two binary treatments: whether the individual has diverse experiences (i.e., has experienced both experimental conditions analogous to ``wealth'' and ``poverty'') and whether the individual was provided with motivations for being empathetic.
We consider the margin $(0,0)$ as the control level ($141$ units), $(0,1)$, $(1,0)$ as two treatment levels ($96$, $41$ units, respectively), and adopt the contrast effect $Y_i(1,0)/2 + Y_i(0,1)/2 - Y_i(0,0)$. 
Using the Multimarginal Sinkhorn algorithm with error margin $\epsilon = 0.001$, the algorithm converges, and we get the lower bound of the variance to be $0.432$. 
We remark that the lower bound we get is also $79.5 \%$ higher than the baseline lower bound.

Finally, we consider the estimand $\EE_{\mustar}[||Y_i(1,0)/2 + Y_i(0,1)/2 - Y_i(0,0)||_2^2]$ for the Education data\footnote{We focus on students whose baseline GPA falls within the lower $20\%$.} from the Student Achievement and Retention (STAR) Demonstration Project \parencite{angrist2009incentives}.
Here researchers investigate the impact of two scholarship incentive programs, the Student Support Program (SSP) and the Student Fellowship Program (SFP), to the academic performance.
We consider the margin $(0,0)$ as the control level ($161$ units), $(0,1)$, $(1,0)$ as two treatment levels ($37$, $36$ units, respectively), and adopt the contrast effect $Y_i(1,0)/2 + Y_i(0,1)/2 - Y_i(0,0)$. 
Using the Multimarginal Sinkhorn algorithm with error margin $\epsilon = 0.001$, the algorithm converges, and we get the lower bound of the variance to be $0.0654$. 
We remark that the lower bound we get is also $72.3 \%$ higher than the baseline lower bound.

\begin{table}[ht]
\centering
\caption{Lower bounds for detection of treatment effect.} \label{tb:ht}
\begin{tabular}{lccc}
\hline

Dataset & Baseline lower bound & MOT lower bound & Percentage improvement \\
\hline
Epitaxial Layer Growth & 0.0227 &  0.0460 & 102.8\% \\
Helpfulness & 0.241 & 0.432 & 79.5\% \\
Education & 0.0380 & 0.0654 & 72.3\% \\
\hline
\end{tabular}
\end{table}

\subsubsection{Covariance between treatment effects}\label{sec:covariance}

Continued from \Cref{eg:cov-between-te}, we investigate the correlation between treatment effects regarding different outcomes.
In particular, we return to the Education dataset of the STAR project in \Cref{sec:partial.identification}.
We consider the contrast effect in \Cref{sec:partial.identification} for the responses $Y^j(\cdot)$, $j \in \{1,2\}$ representing the $j$-th year GPA at different treatment levels. 
We estimate the partially identified set for the covariance between the contrasts regarding the first-year GPA and the second-year GPA.

The lower bound of the covariance is $-0.322$, and the upper bound of the covariance is $2.256$. 
Even though we can not rule out the possibility that the treatment effects in the first-year GPA and the second-year GPA are negatively correlated, the asymmetry $|-0.322| \ll |2.256|$ seems to indicate that the treatment effects are positively correlated.
To sharpen the bound of the covariance, one can further adjust for baseline covariates like initial GPA and age to remove the variation in the first, and second-year GPAs that are not attributed to the STAR project (discussed in \Cref{sec:discussion}).

There are five outcomes available in the dataset: the GPA at the end of year 1 (year 2), whether the student is in good standing at the end of year 1 (year 2), and the grade of the first semester in the first year. 
We also carried out the computation for covariances for other outcome pairs.
The results of lower and upper bounds are displayed in \Cref{fig:cov_bounds}.

\begin{figure}[h]
    \centering
    \includegraphics[width = \textwidth]{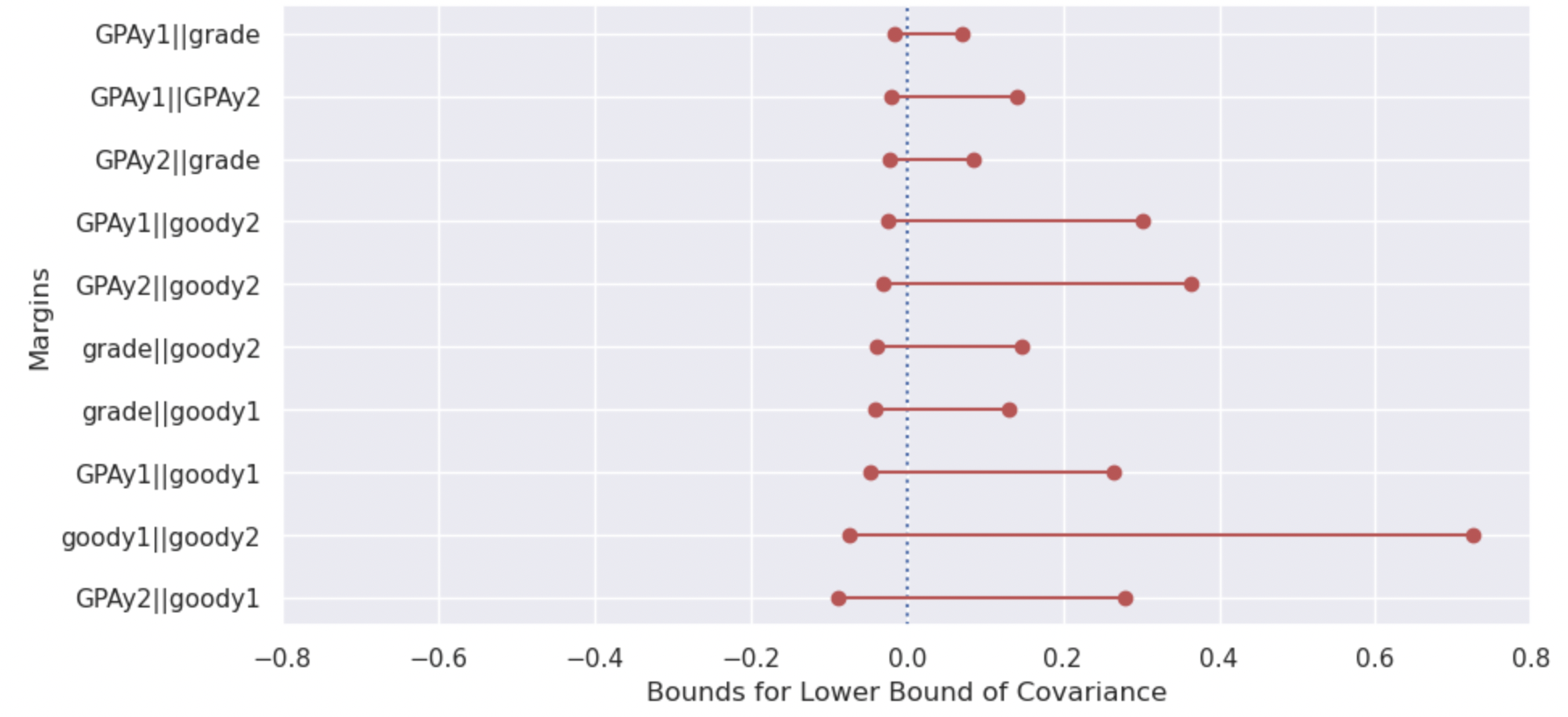}
    \caption{Lower and upper bounds for the covariance between different pairs of outcomes. 
    Here GPAy1 means the GPA at the end of year 1, and similarly for GPAy2; goody1 means whether the student is in good standing at the end of year 1, and similarly for goody2; grade means the grade of the first semester in the first year.}
    \label{fig:cov_bounds}
\end{figure}

\subsubsection{Neymanian confidence interval}\label{sec:finite.sample}

Continued from \Cref{eg:design-based-inference}, we apply the proposed method to tighten the conventional estimator of the variance \eqref{eq:variance.design.based}. 
We return to the Expitaxial layer growth dataset.
In particular, we focus on the three treatment levels $(0,1)$, $(1,0)$, $(1,1)$ and consider the contrast $(Y_i(1,1) - Y_i(1,0)/2 - Y_i(0,1)/2)$.
Our lower bound of $S_{\tau}^2$ in \eqref{eq:variance.design.based} is noticeably greater than zero, indicating that bounding it by zero is overly conservative. 
As a result, our estimated variance has decreased the conventional counterpart by a significant $8.16$ percent.
For sample size calculation, a reduction of $8.16$ percent in variance equates to a saving of $8.16$ percent in the number of units/repeats required. 
The results are shown in \Cref{tb:var&ci}.

\begin{table}[ht]
\centering
\caption{Neymanian confidence intervals. The contrast effect estimate is $0.0533$. } \label{tb:var&ci}
\begin{tabular}{lcc}
\hline
Method & Variance estimate  & 95\% Confidence interval for $\tau_N$ \\
\hline
Convention 
& 0.113 & $(-0.609, 0.711)$ \\
Our proposal
& 0.104 & $(-0.581, 0.684)$ \\
\hline
\end{tabular}
\end{table}

\section{Discussion}\label{sec:discussion}

In this work, we advocate for using MOT to obtain identified sets for causal estimands that are only partially identifiable. 
We perform theoretical analysis of the plug-in estimator of the identified sets, provide off-the-shelf implementation, and demonstrate the potentials of the approach for various causal tasks across multiple datasets.

We provide several future research directions.
\begin{itemize}
    \item Covariate-assisted MOT.
    In the presence of covariates $X$, the conditional marginal distributions $Y(k) \mid X$ should be respected, a more stringent condition compared to preserving unconditional marginal distributions. 
    We denote the set of joint couplings that adhere to these conditional marginals as $\Gamma_X$, then $\Gamma_X \subseteq \Gamma$ and yields larger population lower bounds of the objectives~\eqref{eq:quadratic.objective} and \eqref{eq:quadratic.type} with $\Gamma$ replaced by $\Gamma_X$.
    For empirical lower bound, conditioning on more covariates will decrease the estimation accuracy of the plug-in estimator. Striking a balance between the population's lower bound and the empirical performance by selecting a subset of covariates 
    and solving the associated MOT problem is of interest.

    \item Approximate joint mixability. For any $K$ centered distributions $\mu(1),...,\mu(K)$ on $\bR^d$, they are jointly mixable if there exists a coupling such that for any $\bm Y(\cdot)$ on the support of the coupling, $\sum_{k=1}^K Y(k)/K = 0$~\parencite{wang2016joint}. 
    The objective in Eq.~\eqref{eq:quadratic.objective} can be used to define a generalization of the joint mixability, where the level of joint mixability is evaluated by the 2-norm. 
    One interesting phenomenon as an implication of \Cref{lem:finite-support-eta-bound} is that the 2-norm approximate joint-mixability implies approximate joint-mixability in infinity norm for discretely supported distributions. Concretely, for any $\bm Y(\cdot)$ on the support of the optimal coupling in terms of objective in Eq.\eqref{eq:quadratic.objective}, we have $|\sum_{k=1}^K Y(k)/K| \lesssim 1/\sqrt{K}$ almost surely. This posits a promising research direction of approximate joint mixability. Many open problems can be considered, e.g., can we extend \Cref{lem:finite-support-eta-bound} to arbitrary distributions? What are the relationships between the $p$-norm approximate joint mixability for $p\geq 1$?

    \item Implementation. 
    There is a tradeoff between precision and the run time/resources. Especially when the accuracy tolerance $\varepsilon$ is small, the algorithm might become unstable due to floating point errors. Also, the run time will be much higher.
    Compared to OT, MOT requires a higher resolution, or equivalently a smaller tolerance parameter $\varepsilon$, due to a larger number (exponential to the number of margins) of cell probabilities, making the acceleration of even higher importance.
    
    Future work could explore potential ways to achieve higher precision in a more efficient way.
    One possibility is utilizing a low-rank tensor approximation to the quadratic objective function. 
    In addition, in the current implementation, we use greedy Sinkhorn where the parameters are updated every $K$ iterations (after we calculate all $K$ margins). 
    We have seen numerical success in simply updating one margin per iteration in a round-robin fashion, which is computationally economic. However, the convergence of this algorithm remains to be studied.

\end{itemize}

\section*{Acknowledgements}
We thank Gabriel Peyr\'{e} for the useful discussion on the implementation of MOT algorithms. Jian Qian acknowledges support from ARO through award W911NF-21-1-0328 and from the Simons Foundation and NSF through award DMS-2031883. Shu Ge acknowledges the support of ARO through award W911NF-21-1-0328.

\printbibliography

\appendix

\section{Notations}
\label{app:notation}

\begin{table}[H]
    \centering
    \caption{Notation Table}
    \begin{tabular}{m{5.5cm} m{9.5cm}}
        \toprule
        \textbf{Notation} & \textbf{Description} \\
        \midrule
        $n$ & Total number of samples \\
        $d$ & Dimension of the outcome space \\
        $K$ & Number of possible treatment levels \\
        $\bY_i(\cdot)$ & Potential outcomes for unit $i$ \\
        $Y_i(k)$ & Potential outcome for unit $i$ at treatment level $k$ \\
        $Y_i^j(k)$ & $j$-th dimension of potential outcome for unit $i$ at treatment level $k$ \\
        $Y_i(0,0)$, $Y_i(0,1)$, $Y_i(1,0)$, $Y_i(1,1)$ & Potential outcomes under different binary treatment combinations \\
        $\var_{\mustar}(Y_i(2) - Y_i(1))$ & Variance of the treatment effect \\
        $\cov_{\mustar}(Y_{i}^1(2) - Y_{i}^1(1), Y_{i}^2(2) - Y_{i}^2(1))$ & Covariance between individual treatment effects across different dimensions \\
        $\Gamma$ & Collection of joint probabilities respecting the marginal distributions \\
        $\mustar$ & Unknown joint distribution of potential outcomes \\
        $\bar{\mu}_\star$ & Centered counterpart of ${\mu}_\star$, that is, $\bar{\mu}_\star(k)$ is ${\mu}_\star$ minus its expectation \\
        $\mustar(k)$ & Marginal distribution of $Y_i(k)$ \\
        $\muhat_n$ & Empirical distribution of $\mustar$ with $n$ samples \\
        $W_i$ & Realized treatment level for unit $i$ \\
        $\mathbb{E}_{\mustar}[\ell(\bm{Y}_i(\cdot))]$ & Expectation of a causal estimand $\ell$ under the distribution $\mustar$ \\
        $\QMW(\cdot; A)$ & Quadratic objective parameterized by $A$ \\
        $\MW_2^2(\cdot)$ & Multi-Marginal Wasserstein Loss: Average squared $\ell_2$ norm of potential outcomes \\
        $\boldsymbol{\beta}$ & Contrast vector for treatments with multiple levels \\
        $\tau_{\boldsymbol{\beta}}$ & Average contrast effect estimand \\
        $\hat{\tau}_{\boldsymbol{\beta}}$ & Difference-in-means type estimator \\
        $S_k^2$ & Sample variance of potential outcomes at treatment level $k$ \\
        $S_{\tau}^2$ & Sample variance of the contrast treatment effect \\
        $V_{\boldsymbol{\beta}}$ & Variance of the difference-in-means type estimator \\
        \bottomrule
    \end{tabular}
    \label{tab:notation}
\end{table}

\section{Technical lemmas}

Throughout the appendix, for any $L>0$, let $\Flip$ denote the $L$-Lipschitz function class supported on $B_1^d(0)$ such that all functions $\phi\in\Flip$ has $\phi(0) = 0$.
Denote by $\gmstar_A(\mu(1),\dots,\mu(K))$ the optimal coupling between $\mu(1),\dots,\mu(K)$ in terms of the objective \pref{eq:quadratic.type}.
Denote by $\gmstar(\mu(1),\dots,\mu(K))$ the optimal coupling between $\mu(1),\dots,\mu(K)$ in terms of the objective \pref{eq:quadratic.objective}.

\begin{lemma}
\label{lem:radermacher-complexity}
    For any $\mu$ supported on $B_1^d(0)$, the Radermacher complexity of $\cF_L$ under $\mu$ is bounded by
    \begin{align*}
        \cR_n(\cF_L,\mu)\ldef \En_{\eps, X} \brk*{ \sup_{\phi\in \cF_L} \abs*{ \frac{1}{n} \sum_{i=1}^n \eps_i \phi(Y_i)  } } \lesssim  
        \begin{cases}
            Ln^{-2/d}&\text{if~}d>4,\\
            Ln^{-1/2}\log n &\text{if~}d=4,\\
            Ln^{-1/2} & \text{if~}d<4,
        \end{cases}
    \end{align*}
    where $\eps = (\eps_1,...\eps_n)$, $\eps_1,...,\eps_n$ are independent Radermacher variables and $Y_1,...,Y_n$ are independent random variables sampled from the distribution $\mu$.
\end{lemma}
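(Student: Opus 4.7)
The natural route is a symmetrization--chaining argument based on the metric entropy of $\Flip$. The first step is to recall the classical Kolmogorov--Tikhomirov covering estimate for Lipschitz functions on the unit ball:
\[
\log N(\veps, \Flip, \lVert\cdot\rVert_\infty) \;\lesssim\; (L/\veps)^d, \qquad 0<\veps\le cL.
\]
Since $\phi(0)=0$ pins down the additive constant, this bound is sharp up to constants. Combining with $\lVert\cdot\rVert_{L^2(\mu_n)} \le \lVert\cdot\rVert_\infty$, the same bound controls the empirical $L^2$ covering numbers as well.

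Next, I would feed this into Dudley's entropy integral. After the standard symmetrization/chaining step on the symmetric Rademacher process we obtain
\[
\cR_n(\Flip,\mu) \;\lesssim\; \inf_{\alpha\ge 0}\left[\alpha + \frac{L^{d/2}}{\sqrt{n}}\int_\alpha^{cL}\veps^{-d/2}\,d\veps\right].
\]
The three cases of the lemma then correspond to three regimes of this integral. For $d<4$ the exponent $-d/2 > -2$ and a careful optimization of $\alpha$ together with the a priori $L^2(\mu)$-radius bound $\lVert\phi\rVert_{L^2(\mu)}\le L$ (which truncates the integral from above) yields the clean $L/\sqrt{n}$ rate. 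For $d=4$ the integrand is $\veps^{-2}$, whose primitive $\veps^{-1}$ combined with the optimal truncation level $\alpha\asymp n^{-1/2}$ produces the logarithmic factor $\log n$.

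For $d>4$, a naive application of Dudley only gives the weaker rate $L n^{-1/d}$, so the final step---and the main obstacle---is obtaining $L n^{-2/d}$. The idea is to run a \emph{localized} chaining argument: restrict to the sub-class $\{\phi\in\Flip : \lVert\phi\rVert_{L^2(\mu)}\le r\}$, where $r$ is chosen as the fixed point of a sub-root function controlling localized Rademacher averages (Bartlett--Bousquet--Mendelson style). The point is that Lipschitz bumps of $L^2$-mass $r$ must be geometrically small in volume, and optimizing $r$ against the chaining integral yields $L n^{-2/d}$. This localization step is the technical heart of the lemma and mirrors the analysis used in \cite{chizat2020faster} to obtain sharp Sinkhorn/Wasserstein rates for $d>4$; the bookkeeping of how the variance bound interacts with the polynomial entropy $(L/\veps)^d$ is where care is needed.
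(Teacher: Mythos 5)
Your proof does not close, and the gap is not just a matter of bookkeeping---the entropy estimate you start from is the wrong one, and the localization step you propose to compensate cannot succeed.

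First, the Dudley computation. With the Kolmogorov--Tikhomirov bound $\log N(\veps,\Flip,\norm{\cdot}_\infty)\lesssim (L/\veps)^d$ that you invoke, the chaining integral is $n^{-1/2}\int_\alpha^{cL}(L/\veps)^{d/2}\,d\veps$. This converges as $\alpha\to 0$ only when $d/2<1$, i.e.\ $d<2$, not $d<4$ as you claim (your ``$-d/2>-2$'' should be ``$-d/2>-1$''). The resulting rates are $Ln^{-1/2}$ for $d<2$, $Ln^{-1/2}\log n$ for $d=2$, and $Ln^{-1/d}$ for $d>2$. So even before reaching your localization step, your bound already falls short at $d=3$ and $d=4$, where the lemma claims $n^{-1/2}$ but your Dudley argument gives only $n^{-1/3}$ and $n^{-1/4}$.

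Second, and more fundamentally, localization cannot rescue the argument. The quantity $\cR_n(\Flip,\mu)$ is (up to symmetrization constants) the expected Wasserstein-1 error $\EE\,W_1(\hat\mu_n,\mu)$, and for general $L$-Lipschitz functions this is known to be $\Theta(Ln^{-1/d})$ for $d\geq 3$ (the lower bound is attained e.g.\ by the uniform distribution on the unit ball). A fixed-point/localized Rademacher argument of Bartlett--Bousquet--Mendelson type improves rates when the loss class has favorable variance structure, but it cannot push the rate below the minimax rate of the underlying supremum, and $n^{-1/d}$ is tight here. So there is no route from $(L/\veps)^d$ entropy to $Ln^{-2/d}$.

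What the paper actually does is different and more elementary than you anticipate: it cites a \emph{sharper} covering bound, $\log\cN(\cF_L,\norm{\cdot}_\infty,u)\lesssim (L/u)^{d/2}$, attributed to Guntuboyina--Sen. That exponent $d/2$ is the metric entropy of \emph{convex} bounded Lipschitz functions, not of general Lipschitz functions. (It is justified in the paper's downstream use, since the dual potentials $\phistar_k$ constructed in the quadratic MOT duality are infima of affine functions, hence concave.) With $\sqrt{\log\cN}\lesssim (L/u)^{d/4}$ the Dudley integral converges for $d<4$, is logarithmic at $d=4$, and gives $n^{-2/d}$ for $d>4$ after optimizing the truncation $\delta\asymp Ln^{-2/d}$---exactly the stated cases, with no localization needed. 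This is also what \cite{chizat2020faster} do; they do not localize, they exploit the semiconcavity of dual potentials. Your attribution of a localization argument to that reference is incorrect. The fix to your proof is therefore not a more refined chaining argument, but recognizing that one must work with the restricted (concave) subclass to which the $d/2$-exponent entropy applies.
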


\begin{proof}[\pfref{lem:radermacher-complexity}]
We follow the derivation of \cite[Lemma 4]{chizat2020faster}.
On $B_1^d(0)$, any function $\phi\in \cF_L$ is bounded by $\norm{\phi}_\infty \lesssim L$. Thus by \cite[Theorem 5.22]{wainwright2019high}, we have
\begin{align*}
    \cR_n(\cF_L,\mu) \lesssim \inf_{\delta>0} \prn*{\delta + n^{-1/2}\int_\delta^L \sqrt{\log \cN(\cF_L,\norm{\cdot}_\infty,u) d u }  },
\end{align*}
where $\cN(\cF_L,\norm{\cdot}_\infty,u)$ denotes the covering number of $\cF_L$ in $\norm{\cdot}_\infty$ at scale $u$. Then by \cite[Theorem 1]{guntuboyina2012l1} which obtains $\log\cN(\cF_L,\norm{\cdot}_\infty,u)\lesssim (u/L)^{-d/2}$, we have
\begin{align*}
    \cR_n(\cF_L,\mu) &\lesssim \inf_{\delta>0} \prn*{\delta + n^{-1/2}\int_\delta^L (u/L)^{-d/4} d u }  \\
    &= \inf_{\delta>0} \prn*{\delta + L n^{-1/2}\int_{\delta/L}^1 u^{-d/4} d u }\\
    &\lesssim 
        \begin{cases}
            Ln^{-2/d}&\text{if~}d>4,\\
            Ln^{-1/2}\log n &\text{if~}d=4,\\
            Ln^{-1/2} & \text{if~}d<4,
        \end{cases}
\end{align*}
where for $d< 4$, we choose $\delta=0$ and $\delta=L n^{-1/2}$ for $d=4$ and $\delta =L n^{-2/d}$ for $d>4$. 
\end{proof}

\begin{lemma}
\label{lem:expectation-radermacher}
     For any $\mu$ supported on $B_1^d(0)$ and $\muhat_n$ an empirical distribution of $n$ independent samples, we have
    \begin{align*}
        \En \brk*{\sup_{\phi \in \mathcal{F}_{L}} \abs*{ \int \phi(y) d({\muhat}_n - \mu)}} \lesssim 2 \cR_n(\cF_L,\mu) + Ln^{-1/2}.
    \end{align*}
\end{lemma}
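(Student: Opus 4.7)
The plan is to apply the classical Gin\'e--Zinn symmetrization via a ghost sample; this is a two-step routine that reduces the empirical process to the Rademacher complexity already controlled in the previous lemma.

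First I will introduce an independent copy $Y_1',\dots,Y_n' \iiddistr \mu$ with empirical measure $\muhat_n'$. Since $\mu = \En[\muhat_n']$, a standard Jensen's-inequality step pulls the outer expectation through the supremum and absolute value:
$$\En \sup_{\phi\in\cF_L}\abs*{\int\phi\,d(\muhat_n-\mu)} \;\leq\; \En\En'\sup_{\phi\in\cF_L}\abs*{\int\phi\,d(\muhat_n-\muhat_n')}.$$
Then I will insert i.i.d. Rademacher signs $\epsilon_1,\dots,\epsilon_n$ independent of the samples. Because the pairs $(Y_i,Y_i')$ are i.i.d. and exchangeable in each coordinate, the joint law of $(\phi(Y_i)-\phi(Y_i'))_{i=1}^n$ is invariant under coordinatewise sign flips, so the previous display equals
$$\En\En'\En_\epsilon \sup_{\phi\in\cF_L}\abs*{\frac{1}{n}\sum_{i=1}^n \epsilon_i\bigl(\phi(Y_i)-\phi(Y_i')\bigr)}.$$

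To finish, I will split the absolute value by the triangle inequality into two identically distributed pieces, each equal to $\cR_n(\cF_L,\mu)$, which yields the bound $\leq 2\cR_n(\cF_L,\mu)$. The additive $Ln^{-1/2}$ in the statement is then absorbed as a harmless lower-order slack: by \Cref{lem:radermacher-complexity}, $\cR_n(\cF_L,\mu)\gtrsim Ln^{-1/2}$ in every regime, since the Rademacher complexity of any uniformly bounded, nontrivial class sits at least at the CLT rate.

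I do not expect any substantive obstacle; this is a textbook symmetrization. The only subtlety worth flagging is that the supremum is over the absolute value of the centered empirical process, which in general could cost an extra factor of two; here it does not, because $\cF_L$ is closed under negation ($\phi\in\cF_L\Rightarrow -\phi\in\cF_L$), so the symmetrized process splits cleanly into two identically distributed terms with total constant exactly $2$.
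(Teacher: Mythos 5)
Your proof is correct, and it takes a genuinely different route from the paper. The paper cites the concentration form of the uniform law (Wainwright, Theorem~4.10): with probability at least $1-\exp(-n\delta^2/(2L^2))$, $\sup_{\phi\in\cF_L}\abs{\int\phi\,d(\muhat_n-\mu)}\leq 2\cR_n(\cF_L,\mu)+\delta$; it then integrates the tail, $\int_0^\infty \exp(-n\delta^2/(2L^2))\,d\delta\lesssim Ln^{-1/2}$, which is exactly where the additive $Ln^{-1/2}$ in the statement comes from. You instead run Gin\'e--Zinn symmetrization directly at the level of expectations: ghost sample plus Jensen, sign-flip invariance of the pair process, and a triangle split, landing on the clean bound $\En\sup_{\phi}\abs{\int\phi\,d(\muhat_n-\mu)}\leq 2\cR_n(\cF_L,\mu)$ with no slack at all. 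That is strictly stronger than what the lemma asks for, so the $\lesssim$ with the extra $Ln^{-1/2}$ holds trivially; you do not need (and should drop) the side remark that $\cR_n\gtrsim Ln^{-1/2}$, which is unproven here and irrelevant once you have $2\cR_n\leq 2\cR_n+Ln^{-1/2}$. What each approach buys: yours is shorter, bypasses any appeal to bounded differences or McDiarmid, and yields the sharper constant; the paper's concentration-then-integrate route delivers the high-probability deviation bound as a free byproduct, which can be reused if one later wants tail control rather than just the expectation. Both are fine for the purpose at hand; your version is the more economical way to prove precisely this lemma.
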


\begin{proof}[\pfref{lem:expectation-radermacher}]
By \cite[Theorem 4.10]{wainwright2019high}, we have for all $\delta>0$, with probability at least $1-\exp( -n\delta^2/(2L^2))$,
    \begin{align*}
        \sup_{\phi \in \mathcal{F}_{L}} \abs*{ \int \phi(y) d({\muhat}_n - \mu)} \leq 2 \cR_n(\cF_L,\mu) + \delta.
    \end{align*}
Then since $\En[X] = \int_0^\infty \bP(X\geq \delta)\d \delta$ for all positive random variable, we have
    \begin{align*}
                \revindent\En \brk*{\sup_{\phi \in \mathcal{F}_{L}} \abs*{ \int \phi(y) d({\muhat}_n - \mu)} -  2 \cR_n(\cF_L,\mu)} \\
                &\leq \En \brk*{\prn*{\sup_{\phi \in \mathcal{F}_{L}} \abs*{ \int \phi(y) d({\muhat}_n - \mu)} -  2 \cR_n(\cF_L,\mu)}_+} \\
                &= \int_0^\infty \bP \prn*{\prn*{\sup_{\phi \in \mathcal{F}_{L}} \abs*{ \int \phi(y) d({\muhat}_n - \mu)} -  2 \cR_n(\cF_L,\mu)}_+\geq \delta} \d\delta \\
                &\leq \int_0^\infty \exp( -n\delta^2/(2L^2)) \d\delta \\
                &\lesssim L n^{-1/2}.
    \end{align*}
\end{proof}

\section{Omitted proofs from \Cref{sec:setup}}

\begin{lemma}
    \label{lem:mean-decomposition}
For any $K$ distributions $\mu(1),\dots,\mu(K)$ on $\bR^d$, let $\mubar(1),\dots,\mubar(K)$ be the centered counterpart, that is, $\mubar(k)$ is a translation of $\mu(k)$ by $-\En_{\mu(k)}\brk*{X}$ and $Y_1(k)\sim \mu(k)$ for $k\in [K]$. Then we have
\begin{align*}
\MW_2^2(\mu(1),\dots,\mu(K)) = \MW_2^2(\mubar(1),\dots,\mubar(K)) + \left\| \frac{\sum_{k=1}^{K} \EE_{\mu(k)} [Y_1(k)]}{K}  \right\|_2^2
\end{align*}
\end{lemma}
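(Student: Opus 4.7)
The plan is to expand the squared $\ell_2$ norm after shifting each coordinate by its mean and exploit the bijection between couplings of $\mu(1),\dots,\mu(K)$ and couplings of their centered counterparts.

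First I would set up notation: let $m_k \ldef \En_{\mu(k)}[Y_1(k)]$ and $\bar{m} \ldef (1/K)\sum_{k=1}^K m_k$. The key structural observation is that translation defines a bijection $T: \Gamma(\mu(1),\dots,\mu(K)) \to \Gamma(\mubar(1),\dots,\mubar(K))$ via $(T\gamma)(d\bar{y}) = \gamma(d(\bar{y}+m))$, where $m = (m_1,\dots,m_K)$. Under this bijection, the relation $y(k) = \bar{y}(k) + m_k$ yields
\begin{align*}
    \frac{\sum_{k=1}^K y(k)}{K} = \frac{\sum_{k=1}^K \bar{y}(k)}{K} + \bar{m}.
\end{align*}

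Next I would expand the squared norm pointwise:
\begin{align*}
    \left\|\frac{\sum_{k=1}^K y(k)}{K}\right\|_2^2 = \left\|\frac{\sum_{k=1}^K \bar{y}(k)}{K}\right\|_2^2 + 2\left\langle \frac{\sum_{k=1}^K \bar{y}(k)}{K}, \bar{m}\right\rangle + \|\bar{m}\|_2^2,
\end{align*}
and integrate against $\bar{\gamma} = T\gamma$. The cross term vanishes: for each $k$, $\int \bar{y}(k)\, \bar{\gamma}(d\bar{y}) = \En_{\mubar(k)}[\bar{Y}(k)] = 0$ by construction of $\mubar(k)$. Therefore
\begin{align*}
    \int \left\|\frac{\sum_{k=1}^K y(k)}{K}\right\|_2^2 \gamma(dy) = \int \left\|\frac{\sum_{k=1}^K \bar{y}(k)}{K}\right\|_2^2 \bar{\gamma}(d\bar{y}) + \|\bar{m}\|_2^2.
\end{align*}

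Finally, since the map $T$ is a bijection and the additive constant $\|\bar{m}\|_2^2$ does not depend on $\gamma$, taking the infimum over $\gamma \in \Gamma(\mu(1),\dots,\mu(K))$ on the left corresponds exactly to taking the infimum over $\bar{\gamma} \in \Gamma(\mubar(1),\dots,\mubar(K))$ on the right, yielding the claimed identity. There is no real obstacle here; the only point that requires a sentence of care is verifying that translation is a measure-preserving bijection between the two sets of couplings, which is immediate from checking marginals.
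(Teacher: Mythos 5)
Your proof is correct and follows essentially the same route as the paper's: expand the squared norm after centering, observe the cross term vanishes because the centered marginals have zero mean, and conclude. The only cosmetic difference is that you package the argument as a cost-preserving (up to a constant) bijection between the two sets of couplings, whereas the paper runs the same expansion through the optimal coupling in each direction to get a two-sided inequality; your phrasing is arguably a touch cleaner since it makes explicit that the constant shift holds coupling-by-coupling, not just at the infimum.
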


\begin{proof}[\pfref{lem:mean-decomposition}]
Suppose $\gmstar = \gmstar(\mu(1),...,\mu(K))$ achieves $\MW_2^2(\mu(1),\dots,\mu(K))$, then we have
\begin{align*}
    &\MW_2^2(\mu(1),\dots,\mu(K)) \\
    &= \int_{y} \norm*{ \frac{\sum_{k=1}^K y(k)}{K} }^2 \gmstar(d y)\\
    &= \int_{y} \norm*{ \frac{\sum_{k=1}^K \prn*{y(k)-\En_{\mu(k)} \brk*{Y_1(k)} } }{K} }^2 \gmstar(d y) +  \left\| \frac{\sum_{k=1}^{K} \EE_{\mu(k)} \brk*{Y_1(k)}}{K}  \right\|_2^2\\
    &\leq \MW_2^2(\mubar(1),\dots,\mubar(K)) +   \left\| \frac{\sum_{k=1}^{K} \EE_{\mu(k)} \brk*{Y_1(k)}}{K}  \right\|_2^2.
\end{align*}
The inequality can be reversed considering the optimal coupling $\bar{\gamma}^\star = \gmstar(\mubar(1),...,\mubar(K))$, which concludes our proof. 
\end{proof}

\begin{lemma}
    \label{lem:mean-decomposition-quadratic}
For any $K$ distributions $\mu(1),\dots,\mu(K)$ on $\bR^d$, let $\mubar(1),\dots,\mubar(K)$ be the centered counterpart, that is, $\mubar(k)$ is a translation of $\mu(k)$ by $-\En_{\mu(k)}\brk*{X}$ and $Y_1(k)\sim \mu(k)$ for $k\in [K]$.  Let $A$ be any symmetric matrix.
Then we have
\begin{align*}
\QMW(\mu(1),\dots,\mu(K);A) = \QMW(\mubar(1),\dots,\mubar(K);A) + \sum_{k,l }a_{kl}\inner*{\EE_{\mu(k)} [Y_1(k)] , \EE_{\mu(l)} [Y_1(l)]}.
\end{align*}
\end{lemma}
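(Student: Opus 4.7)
The plan is to mimic the argument for \Cref{lem:mean-decomposition} but keep the cross terms general (rather than specializing to the rank-one matrix $A=(1/K^2)_{k,l}$). Let $m(k) \ldef \EE_{\mu(k)}[Y_1(k)]$ and write, for any point $y=(y(1),\dots,y(K))$ in the support of a coupling $\gamma \in \Gamma(\mu(1),\dots,\mu(K))$, the decomposition $y(k) = \bar{y}(k) + m(k)$, where $\bar{y}(k) \ldef y(k)-m(k)$. Note that the map $T\colon y \mapsto \bar{y}$ is a bijection between the support of any coupling of $\mu(1),\dots,\mu(K)$ and that of a coupling of $\mubar(1),\dots,\mubar(K)$; pushing $\gamma$ forward through $T$ therefore gives a bijection $\gamma \leftrightarrow \bar\gamma$ between $\Gamma(\mu(1),\dots,\mu(K))$ and $\Gamma(\mubar(1),\dots,\mubar(K))$.

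Next I would expand the integrand using bilinearity of the quadratic form:
\begin{align*}
y^\top A y
= \sum_{k,l} a_{kl} \inner{\bar y(k)+m(k)}{\bar y(l)+m(l)}
= \bar y^\top A \bar y + \sum_{k,l} a_{kl}\inner{\bar y(k)}{m(l)} + \sum_{k,l} a_{kl}\inner{m(k)}{\bar y(l)} + \sum_{k,l} a_{kl}\inner{m(k)}{m(l)}.
\end{align*}
Integrating against $\gamma$, the two cross terms vanish: by definition of $\mubar(k)$, we have $\int \bar y(k)\,\gamma(d y) = \EE_{\mubar(k)}[\bar Y_1(k)] = 0$ for every $k$. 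Therefore
\begin{align*}
\int y^\top A y\,\gamma(dy)
= \int \bar y^\top A \bar y\,\bar\gamma(d\bar y) + \sum_{k,l} a_{kl}\inner{m(k)}{m(l)},
\end{align*}
where the last sum does not depend on the coupling.

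Finally I would take the infimum over $\gamma \in \Gamma(\mu(1),\dots,\mu(K))$: since $\gamma \mapsto \bar\gamma$ is a bijection onto $\Gamma(\mubar(1),\dots,\mubar(K))$ and the additive correction term is coupling-independent, the infimum on the left equals $\QMW(\mubar(1),\dots,\mubar(K);A)$ plus that correction, which is exactly the claimed identity. I do not anticipate a real obstacle here: symmetry of $A$ is never used in a substantive way (the proof goes through for any $A$), and the whole argument is a one-line bilinear expansion plus the bijection between centered and uncentered couplings; the only thing to be slightly careful about is verifying that $T$ really is a measure-preserving bijection on couplings, which follows immediately because translating each coordinate by a deterministic vector preserves the marginal structure.
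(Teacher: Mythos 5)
Your proof is correct and uses the same bilinear decomposition around the marginal means as the paper; the fact that the cross terms vanish upon integration because each centered marginal has mean zero is the key observation in both. The only difference is in packaging: you argue via the translation bijection between $\Gamma(\mu(1),\dots,\mu(K))$ and $\Gamma(\mubar(1),\dots,\mubar(K))$ so the identity holds coupling-by-coupling and transfers directly to the infimum, whereas the paper picks an optimal coupling $\gamma^\star$ and establishes two one-sided inequalities — yours is marginally cleaner in that it sidesteps the (benign) question of whether the infimum is attained.
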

\begin{proof}[\pfref{lem:mean-decomposition-quadratic}]
Suppose $\gmstar = \gmstar(\mu(1),...,\mu(K);A)$ achieves $\QMW(\mu(1),...,\mu(K);A)$, then we have
\begin{align*}
    &\QMW(\mu(1),\dots,\mu(K);A) \\
    &= \int_{y}  \sum_{k,l} a_{kl}\inner*{y(k) , y(l)} \gmstar(d y)\\
    &=  \int_{y}  \sum_{k,l} a_{kl}\inner*{y(k)- \EE_{\mu(k)} [Y_1(k)] , y(l)-\EE_{\mu(l)} [Y_1(l)]} \gmstar(d y) + \sum_{k,l }a_{kl}\inner*{\EE_{\mu(k)} [Y_1(k)] , \EE_{\mu(l)} [Y_1(l)]}\\
    &\leq \QMW(\mubar(1),\dots,\mubar(K);A) +   \sum_{k,l }a_{kl}\inner*{\EE_{\mu(k)} [Y_1(k)] , \EE_{\mu(l)} [Y_1(l)]}.
\end{align*}
The inequality can be reversed considering the optimal coupling $\bar{\gamma}^\star = \gmstar(\mubar(1),...,\mubar(K))$, which concludes our proof. 
\end{proof}

\section{Omitted proofs from \Cref{sec:theory}}
\subsection{Upper bound}
\label{app:theory}

\begin{lemma}
    \label{lem:mean-concentration} For any matrix $A = (a_{ij})_{1\leq i,j\leq K}$, we have
    \begin{align*}
        \En\brk*{  \abs*{ \sum_{k,l} a_{kl}\inner*{  \En_{\mustar(k)} Y_1(k),   \En_{\mustar(l)} Y_1(l)}   - \sum_{k,l} a_{kl}  \inner*{\En_{\muhat_n(k)} Y_1(k), \En_{\muhat_n(l)} Y_1(l)}   }} \lesssim  K\sqrt{ \sum_{k,l} a_{kl}^2 }\cdot n^{-1/2} .
    \end{align*}
\end{lemma}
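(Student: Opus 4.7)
\textbf{Proof plan for \Cref{lem:mean-concentration}.}

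The plan is to decompose the target quantity into linear and quadratic pieces in the per-margin mean deviations. Let $\mu_k \ldef \En_{\mustar(k)}[Y_1(k)]$, $\gammahat_k \ldef \En_{\muhat_n(k)}[Y_1(k)]$, and $\Delta_k \ldef \gammahat_k - \mu_k$. Expanding bilinearly,
\begin{equation*}
D \ldef \sum_{k,l}a_{kl}\inner{\mu_k}{\mu_l} - \sum_{k,l}a_{kl}\inner{\gammahat_k}{\gammahat_l} = -(T_1 + T_2 + T_3),
\end{equation*}
where $T_1 = \sum_{k,l}a_{kl}\inner{\Delta_k}{\mu_l}$, $T_2 = \sum_{k,l}a_{kl}\inner{\mu_k}{\Delta_l}$, and $T_3 = \sum_{k,l}a_{kl}\inner{\Delta_k}{\Delta_l}$. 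The task reduces to bounding $\En|T_j|$ for $j=1,2,3$ by $O\!\prn{K\sqrt{\sum_{k,l}a_{kl}^2}\,n^{-1/2}}$. The setup in \Cref{sec:theory} assigns $n/K$ fresh i.i.d.\ samples to each treatment level, so the $K$ vectors $\{\Delta_k\}$ are mutually independent, mean zero, and satisfy $\En\norm{\Delta_k}^2 \leq K/n$ because every observation lies in $B_1^d(0)$.

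For $T_1$, I would write $T_1 = \sum_k \inner{\Delta_k}{v_k}$ with $v_k \ldef \sum_l a_{kl}\mu_l$. Independence and mean-zero of the $\Delta_k$ give $\mathsf{Var}(T_1) = \sum_k \En[\inner{\Delta_k}{v_k}^2] \leq \sum_k \norm{v_k}^2 \En\norm{\Delta_k}^2 \leq (K/n)\sum_k\norm{v_k}^2$ by Cauchy--Schwarz. Applying Cauchy--Schwarz again, $\norm{v_k}^2 \leq (\sum_l|a_{kl}|)^2 \leq K\sum_l a_{kl}^2$, so $\mathsf{Var}(T_1)\leq (K^2/n)\sum_{k,l}a_{kl}^2$, and therefore $\En|T_1|\leq \sqrt{\mathsf{Var}(T_1)} \lesssim K\sqrt{\sum_{k,l}a_{kl}^2}\,n^{-1/2}$. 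The bound on $T_2$ is identical by symmetry.

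For $T_3$, I would split $T_3 = T_3' + T_3''$ with $T_3' \ldef \sum_{k\neq l} a_{kl}\inner{\Delta_k}{\Delta_l}$ and $T_3'' \ldef \sum_k a_{kk}\norm{\Delta_k}^2$. The cross term $T_3'$ has mean zero by independence; for the variance, the cross-covariance $\En[\inner{\Delta_k}{\Delta_l}\inner{\Delta_{k'}}{\Delta_{l'}}]$ vanishes unless $\{k,l\}=\{k',l'\}$ (condition on the shared coordinate when exactly one index is shared and use $\En\Delta = 0$; when no index is shared, independence kills the term). For matched pairs, $\En[\inner{\Delta_k}{\Delta_l}^2]\leq \En\norm{\Delta_k}^2\En\norm{\Delta_l}^2 \leq (K/n)^2$, yielding $\mathsf{Var}(T_3')\lesssim (K/n)^2\sum_{k,l}a_{kl}^2$ and $\En|T_3'|\lesssim (K/n)\sqrt{\sum_{k,l}a_{kl}^2}$. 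For $T_3''$, simply bound $\En|T_3''|\leq \sum_k|a_{kk}|\En\norm{\Delta_k}^2 \leq (K/n)\sqrt{K}\sqrt{\sum_k a_{kk}^2} \leq (K^{3/2}/n)\sqrt{\sum_{k,l}a_{kl}^2}$. Both pieces of $T_3$ are dominated by the target bound since $K\leq n$. Combining via the triangle inequality finishes the proof.

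The argument involves no real obstacle: the only subtlety is the vanishing of the one-index-shared covariances in the variance of $T_3'$, which follows by iterated conditioning on the shared $\Delta_k$. Everything else is Cauchy--Schwarz plus the uniform bound $\En\norm{\Delta_k}^2\leq K/n$ coming from boundedness of the support in $B_1^d(0)$ and the $n/K$ samples per margin.
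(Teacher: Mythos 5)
Your proof is correct. The decomposition you use — symmetric bilinear expansion $\gammahat_k\gammahat_l - \mu_k\mu_l = \inner{\mu_k}{\Delta_l} + \inner{\Delta_k}{\mu_l} + \inner{\Delta_k}{\Delta_l}$, giving the three pieces $T_1, T_2, T_3$ — is structurally different from the paper's. The paper uses an \emph{asymmetric} telescoping for the off-diagonal terms, $\mu_k\mu_l - \gammahat_k\gammahat_l = \mu_k(\mu_l - \gammahat_l) + (\mu_k - \gammahat_k)\gammahat_l$, which is deliberately linear in the deviations and therefore avoids any quadratic cross term; it then invokes Khintchine's inequality to bound each resulting sum of independent mean-zero vectors. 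Your symmetric expansion instead produces the extra quadratic piece $T_3$, which you then have to verify is lower order: the cross part $T_3'$ by the vanishing-covariance argument (one-shared-index terms die after conditioning on the shared $\Delta$; matched pairs give $(K/n)^2$ per term), and the diagonal part $T_3''$ by crude Cauchy–Schwarz. Both of these are indeed $o(Kn^{-1/2}\sqrt{\sum a_{kl}^2})$ using $K \le n$, which you correctly point out. So your route trades a cleaner decomposition for the obligation to control $T_3$; the paper's route trades a slightly more careful split for never seeing a quadratic term. Both land on the same rate, and both reduce at the core to second-moment bounds on averages of bounded i.i.d.\ vectors with $n/K$ samples per margin; your version is arguably more elementary since it replaces the Khintchine citation with a direct variance computation plus Cauchy–Schwarz.
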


\begin{proof}[\pfref{lem:mean-concentration}]
We first separate the quantity
\begin{align*}
    &\sum_{k,l} a_{kl}\inner*{  \En_{\mustar(k)} Y_1(k),   \En_{\mustar(l)} Y_1(l)}   - \sum_{k,l} a_{kl}  \inner*{\En_{\muhat_n(k)} Y_1(k), \En_{\muhat_n(l)} Y_1(l)}  \\
    & =  \sum_{k=1}^K a_{kk} \prn*{\inner*{\En_{\mustar(k)}[Y_1(k)], \En_{\mustar(k)}[Y_1(k)]  }- \inner*{\En_{\muhat(k)}[Y_1(k)], \En_{\muhat(k)}[Y_1(k)]  }}\\
    &\quad+  \sum_{k\neq l} a_{kl} \inner*{\En_{\mustar(k)}[Y_1(k)], \En_{\mustar(l)}[Y_1(l)]  } \\
    &\quad- \sum_{k\neq l} a_{kl} \inner*{\En_{\muhat_n(k)}[Y_1(k)], \En_{\muhat_n(l)}[Y_1(l)]  } \\
    &=\sum_{k=1}^K a_{kk} \inner*{ \En_{\mustar(k)} Y_1(k) -  \En_{\muhat_n(k)} Y_1(k) ,  \En_{\mustar(k)} Y_1(k)   + \En_{\muhat_n(k)} Y_1(k)  } \\
    &\quad+  \sum_{k\neq l} a_{kl} \inner*{\En_{\mustar(k)}[Y_1(k)], \En_{\mustar(l) - \muhat_n(l)}[Y_1(l)]  } \\
    &\quad+ \sum_{k\neq l} a_{kl} \inner*{\En_{\mustar(k)-\muhat_n(k)}[Y_1(k)], \En_{ \muhat_n(l)}[Y_1(l)]  }.
\end{align*}
Then, by triangle inequality together with Khintchine's inequality (Exercise 2.6.7 of \cite{vershynin2018high}), we have
\begin{align*}
    &   \En\brk*{  \abs*{ \sum_{k,l} a_{kl}\inner*{  \En_{\mustar(k)} Y_1(k),   \En_{\mustar(l)} Y_1(l)}   - \sum_{k,l} a_{kl}  \inner*{\En_{\muhat_n(k)} Y_1(k), \En_{\muhat_n(l)} Y_1(l)}   }}  \\
    &\leq  \En\brk*{ \abs*{ \sum_{k=1}^K a_{kk} \inner*{ \En_{\mustar(k)} Y_1(k) -  \En_{\muhat_n(k)} Y_1(k) ,  \En_{\mustar(k)} Y_1(k)   + \En_{\muhat_n(k)} Y_1(k)  } } } \\
    &\quad + \sum_{k=1}^K \En\brk*{ \En_{\mustar(k)} \abs*{\En_{(\mustar - \muhat_n)(-k)} \brk*{ \sum_{l\neq k} a_{kl} \inner*{ Y_1(k),   Y_1(l) }  }   }  }  \\
    &\quad +  \sum_{k=1}^K \En\brk*{ \En_{\muhat_n(k)} \abs*{\En_{(\mustar - \muhat_n)(-k)} \brk*{ \sum_{l\neq k} a_{kl} \inner*{ Y_1(k),   Y_1(l) }  }   }  } \\
    &\lesssim \prn*{\sqrt{\sum_{k=1}^K a_{kk}^2 } +  \sum_{k=1}^K\sqrt{\sum_{l\neq k} a_{kl}^2}} (n/K)^{-1/2}\\
    &\lesssim \sqrt{K \cdot \sum_{k,l} a_{kl}^2 }\cdot (n/K)^{-1/2} ,
\end{align*}
where the last inequality is by Cauchy-Shwartz.
\end{proof}

\begin{proposition}
    \label{prop:quadratic-form-upper-bound}
    For any symmetric matrix $A \in \bR^{K\times K}$, we have
    \begin{align}
    \begin{split}
        \revindent[0] \EE\left[\left|\QMW(\muhat_n(1),\dots \muhat_n(K);A ) - \QMW(\mustar(1),\dots,\mustar(K);A)\right|\right] \\
        &\lesssim K\sqrt{\sum\limits_{k=1}^{K}a_{kk}^2}   \cdot  n^{-1/2}  + \prn*{\sum\limits_{k=1}^{K} |a_{kk}|} \wedge
    \begin{cases} 
        \eta_{K,d,A}\cdot K^{3/2} (n/K)^{-2/d}, & \text{if } d > 4, \\
        \eta_{K,d,A}\cdot K^2 n^{-1/2} \log(n/K), & \text{if } d = 4, \\
        \eta_{K,d,A}\cdot K^2n^{-1/2}   , & \text{if } d < 4,
        \end{cases}
    \end{split}
    \end{align}
    where the notation $\lesssim$ hides constants that only depend on the dimension $d$. Moreover, the coefficient $\eta_{K,d,A}$ is defined as
    \begin{align*}
    \eta_{K,d,A} \ldef \sup_{\substack{\mu(1),\dots,\mu(K)\\\text{with zero mean} }} \hspace{3pt} \sup_{\substack{(y(1),\dots,y(K))\in\\ \supp \prn*{\gmstar_A(\mu(1),\dots,\mu(K)) }} } \sup_{k\in [K]} \norm*{\sum\limits_{l\neq k}  \frac{a_{kl}y(l)}{\sqrt{K}}  },
    \end{align*}
    where $\mu(1),\dots,\mu(K)$ are $d$-dimensional distributions with zero mean and $(y(1),\dots,y(K))$ in the support of the coupling $\gmstar_A(\mu(1),\dots,\mu(K))$.

\end{proposition}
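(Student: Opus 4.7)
The plan is to decompose the $\QMW$ functional using \Cref{lem:mean-decomposition-quadratic} into a mean-dependent piece and a ``centered'' piece, and then treat each contribution separately, following the template that \cite{chizat2020faster} established for the two-marginal case and that \Cref{thm:main-upper-bound} extends to $\MW_2^2$.

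First I would write
\begin{align*}
\QMW(\muhat_n;A) - \QMW(\mustar;A) = \underbrace{\bigl[\QMW(\bar{\muhat}_n;A) - \QMW(\bar{\mustar};A)\bigr]}_{(\mathrm{I})} + \underbrace{\bigl[M_A(\muhat_n) - M_A(\mustar)\bigr]}_{(\mathrm{II})},
\end{align*}
where $M_A(\mu) \ldef \sum_{k,l}a_{kl}\langle \En_{\mu(k)}[Y_1(k)], \En_{\mu(l)}[Y_1(l)]\rangle$ and $\bar{\mu}(k)$ denotes the centered margin. Term $(\mathrm{II})$ is controlled in expectation by \Cref{lem:mean-concentration}, yielding a contribution of order $K\sqrt{\sum_{k,l}a_{kl}^2}\cdot n^{-1/2}$; the extra factor $K$ comes from the fact that each margin receives only $n/K$ samples.

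For the centered term $(\mathrm{I})$, I would invoke Kantorovich duality for MOT to write
\begin{align*}
\QMW(\bar{\mu};A) = \sup_{\phi_1,\dots,\phi_K}\Bigl\{\sum_k \int \phi_k\,d\bar{\mu}(k) : \sum_k \phi_k(y(k)) \le y^{\top}Ay \ \text{for all } y\Bigr\}.
\end{align*}
A first-order analysis at any $(y(1),\dots,y(K))$ in the support of $\gmstar_A(\bar{\mu})$ gives $\nabla\phi_k(y(k)) = 2\sum_{l\neq k} a_{kl}y(l) + 2a_{kk}y(k)$. Peeling off the identifiable diagonal piece $\sum_k a_{kk}\int \|y(k)\|^2\,d\bar{\mu}(k)$, whose plug-in error is bounded directly by Khintchine's inequality and absorbed into the $K\sqrt{\sum_{k} a_{kk}^2}\,n^{-1/2}$ contribution, leaves off-diagonal dual potentials that are Lipschitz with constant at most $\sqrt{K}\,\eta_{K,d,A}$ by the very definition of $\eta_{K,d,A}$.

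The remainder follows the dual-stability template: the residual part of $(\mathrm{I})$ is bounded by
\begin{align*}
\sum_{k=1}^{K} \sup_{\phi\in\Flip[\sqrt{K}\eta_{K,d,A}]}\Bigl|\int \phi\,d\bigl(\bar{\muhat}_n(k)-\bar{\mustar}(k)\bigr)\Bigr|,
\end{align*}
to which I would apply \Cref{lem:expectation-radermacher} combined with \Cref{lem:radermacher-complexity} evaluated on the $n/K$ samples per margin. Summing the per-margin Rademacher rate $\sqrt{K}\eta_{K,d,A}\cdot(n/K)^{-\alpha(d)}$ over $K$ margins yields the $K^{3/2}\eta_{K,d,A}(n/K)^{-2/d}$ rate for $d>4$ and the $K^{2}\eta_{K,d,A}n^{-1/2}$ rate (with a logarithmic factor at $d=4$) for $d\le 4$, exactly as in the statement. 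The crude alternative $\sum_k |a_{kk}|$ inside the minimum comes from the trivial upper bound $|\QMW(\bar{\mu};A)|\le \sum_k|a_{kk}|\cdot\sup_k\|y(k)\|^2$ applied directly to $(\mathrm{I})$, and is preferred whenever the empirical-process rate would be looser.

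The main obstacle is justifying that the optimal dual potentials genuinely enjoy Lipschitz constant of order $\sqrt{K}\,\eta_{K,d,A}$ on the relevant supports. This requires a careful envelope-theorem argument together with $c$-concavity for the quadratic multi-marginal cost, generalizing the analysis of \cite{chizat2020faster}. In contrast to \Cref{thm:main-upper-bound}, where a delicate two-step argument (discrete reduction plus an imaginary large-sample coupling) was used to \emph{upper-bound} the analogue of $\eta$ by an absolute constant, here we sidestep that refinement by stating the bound directly in terms of $\eta_{K,d,A}$; the proposition is thus a clean abstract extension of \cite{chizat2020faster}, rather than the sharpened rate of \Cref{thm:main-upper-bound}.
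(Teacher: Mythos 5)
Your proposal follows essentially the same architecture as the paper's proof: (i) split via \Cref{lem:mean-decomposition-quadratic} into the mean term $M_A$ plus the centered $\QMW$ and control $M_A$ via \Cref{lem:mean-concentration}; (ii) pass to Kantorovich duality and peel off the diagonal $\sum_k a_{kk}\|y(k)\|^2$; (iii) argue the remaining dual potentials are $O(\sqrt{K}\,\eta_{K,d,A})$-Lipschitz; (iv) apply \Cref{lem:radermacher-complexity}/\Cref{lem:expectation-radermacher} per margin and sum; (v) interleave the crude bound $\sum_k|a_{kk}|$ via the independent coupling. All these moves are the paper's.

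The one place where your sketch stops short is exactly the step you flag as ``the main obstacle'': showing that there genuinely \emph{exist} optimal dual potentials that are $O(\sqrt{K}\,\eta_{K,d,A})$-Lipschitz \emph{globally}, not merely that a stationarity/envelope identity pins down $\nabla\phi_k$ at support points of $\gmstar_A$. The gradient formula $\nabla\phi_k(y(k)) = 2\sum_{l\neq k}a_{kl}y(l) + 2a_{kk}y(k)$ you write down is only valid on $\supp(\gmstar_A)$, and the dual potentials must be controlled off that set too (the empirical measure $\muhat_n(k)$ is not supported there). The paper closes this by an explicit iterative $c$-transform: starting from any admissible optimal $(\phitil_1,\dots,\phitil_K)$, it defines
\begin{align*}
\phistar_k(y) \ldef \inf_{\substack{y(1),\dots,y(K)\\ \in\, \supp(\gmstar_A)}} 2\inner*{y,\sum_{l\neq k}a_{kl}y(l)} + \sum_{\substack{i\neq j\\ i,j\neq k}}a_{ij}\inner{y(i),y(j)} - \sum_{l<k}\phistar_l(y(l)) - \sum_{l>k}\phitil_l(y(l)),
\end{align*}
and proves by induction on $k$ that $(\phistar_1,\dots,\phistar_k,\phitil_{k+1},\dots,\phitil_K)$ remains admissible and optimal. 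Since each $\phistar_k$ is an infimum of affine functions of $y$ with slope $2\sum_{l\neq k}a_{kl}y(l)$ over the support of $\gmstar_A$, its global Lipschitz constant is at most $2\sqrt{K}\,\eta_{K,d,A}$ by the very definition of $\eta_{K,d,A}$. Your ``envelope-theorem plus $c$-concavity'' intuition is pointing at the right construction (the $c$-transform is precisely the $c$-concave hull), but the proof needs this concrete sweep and induction to be complete; as written, the Lipschitz claim is asserted rather than derived.
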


\begin{proof}[\pfref{prop:quadratic-form-upper-bound}]
    By \cref{lem:mean-decomposition-quadratic,lem:mean-concentration}, we know that the objective can be separated and the mean of the coupling concentrates. Thus, without loss of generality, in this proof, we only consider the marginals $\mustar(1),\dots,\mustar(K),\muhat_n(1),\dots,\muhat_n(K)$ that have zero means. For any $K$ marginals $\mu(1),\dots,\mu(K)$ supported on $B_1^d(0)\subset \bR^d$, we have
    \begin{align*}
        \revindent[.5]\QMW(\mu(1),\dots,\mu(K);A) \\
        &\ldef \inf_{\gamma \in \Gamma(\mu(1),\dots,\mu(K))}  
        \int \sum\limits_{i,j}  a_{ij} y(i)^\top  y(j) ~\gamma(dy)\\
        &\leq  \sum\limits_{k=1}^{K}a_{kk} \int \norm{y(k)}^2 \mu(k,dy(k)) + \sum\limits_{i\neq j} a_{ij} \int y(i)^\top y(j) ~\mu(i,dy(i)) \mu(j,dy(j)) \\
        &= \sum\limits_{k=1}^{K} a_{kk}\int \norm{y(k)}^2 \mu(k,dy(k))\leq \sum\limits_{k=1}^{K} |a_{kk}|,
    \end{align*}
    where the first inequality is by choosing the independent coupling $\gamma(dy) = \prod_{k=1}^{K}\mu(k,dy(k))$ and the last inequality is by the boundedness of $y$. This implies 
    \begin{align}
    \label{ineq:random-control}
        \abs*{\QMW(\mustar(1),\dots,\mustar(K);A) - \QMW(\muhat_n(1),\dots,\muhat_n(K);A)} \leq \sum\limits_{k=1}^{K} |a_{kk}|.
    \end{align}
    Now by the classic duality of \cite{kellerer1984duality}, we have
    \begin{align*}
    \QMW(\mu(1),\dots,\mu(K);A) = \sup_{\substack{f_1,\dots,f_K:\\ \sum\limits_{k=1}^{K}f_k(y(k)) \leq y^\top A y }} \sum\limits_{k=1}^{K} \int f_k(y(k)) \mu(k,dy(k)), 
    \end{align*}
    where $\mu(k,dy(k))$ denote the marignal distribution of $\mu(k)$ on $y(k)$.
    We first separate the square terms out, since they only concern one marginal as
    \begin{align}
        \revindent[0]\sup_{\substack{f_1,\dots,f_K:\\ \sum\limits_{k=1}^{K}f_k(y(k)) \leq y^\top A y }} \sum\limits_{k=1}^{K} \int f_k(y(k)) \mu(k,dy(k)) \notag\\
        &= \sum\limits_{k=1}^{K} \int a_{kk} \norm{y(k)}^2 \mu(k,dy(k)) +  \sup_{\substack{f_1,\dots,f_K:\\ \sum\limits_{k=1}^{K}f_k(y(k)) \leq y^\top A y }} \sum\limits_{k=1}^{K} \int f_k(y(k)) -a_{kk} \norm{y(k)}^2  \mu(k,dy(k))\notag\\
        &=  \sum\limits_{k=1}^{K} \int a_{kk} \norm{y(k)}^2 \mu(k,dy(k)) +  \sup_{\substack{\phi_1,\dots,\phi_K:\\ \sum\limits_{k=1}^{K}\phi_k(y(k)) \leq \sum\limits_{i\neq j} a_{ij}y^\top(i)y(j) }} \sum\limits_{k=1}^{K} \int \phi_k(y(k)) \mu(k,dy(k)),  \label{eq:duality}
    \end{align} 
where the second equality is by setting $\phi_k(y(k)) = f_k(y(k)) - a_{kk} \norm{y(k)}^2$.
Now we are going to show that there exists a sequence of supremum achieving functions $\phistar_{1},\dots,\phistar_K$ that all lie in the space of $\Flip[2K]$. To start with, we note again by the classical duality of \cite{kellerer1984duality} that there exists $\phitil_{1},\dots,\phitil_K$ that achieves the supremum in \cref{eq:duality}. For $k=1,\dots,K$, we iteratively define the function $\phistar_k$ as 
\begin{align*}
\phistar_k(y) \ldef \inf_{\substack{y(1),\dots,y(K)\\
\in \supp(\gmstar_A(\mu(1),\dots,\mu(K)))}} 2\inner*{y, \sum\limits_{l\neq k}a_{kl} y(l) } + \sum\limits_{\substack{i,j:i\neq j,\\ i\neq k,j\neq k}} a_{ij}\inner{y(i), y(j)}  -   \sum_{l=1}^{k-1} \phistar_l(y(l)) - \sum_{l=k+1}^{K} \phitil_l(y(l)).
\end{align*}
And prove by induction that $\phistar_1,\dots,\phistar_{k},\phitil_{k+1},\dots,\phitil_{K}$ are admissible and achieve the supremum in \cref{eq:duality}. When $k=0$, the claim is true by definition. Now suppose the claim is true up to $k-1$.
For any $k$, by definition, we have for any $y(1),\dots, y(K)$
\begin{align*}
    \sum_{l=1}^{k} \phistar_l(y(l)) + \sum_{l=k+1}^{K} \phitil_l(y(l)) \leq \sum\limits_{i\neq j} a_{ij}y(i)y(j).
\end{align*}
Thus $\phistar_1,\dots,\phistar_k,\phitil_{k+1},\dots,\phitil_{K}$ are admissible function for the supremum in \cref{eq:duality}. Furthermore, by the induction hypothesis, $\phistar_1,\dots,\phistar_{k-1},\phitil_{k},\dots,\phitil_{K}$ are admissible function and achieves the supremum in \cref{eq:duality}, we have for any $y(1),\dots, y(k)$
\begin{align*}
    \sum_{l=1}^{k-1} \phistar_l(y(l)) + \sum_{l=k}^{K} \phitil_l(y(l)) \leq \sum\limits_{i\neq j} a_{ij}y(i)y(j).
\end{align*}
This implies 
\begin{align*}
    2\inner*{y(k), \sum\limits_{l\neq k}a_{kl} y(l) } + \sum\limits_{\substack{i,j:i\neq j,\\ i\neq k,j\neq k}} a_{ij}\inner{y(i), y(j)}  -   \sum_{l=1}^{k-1} \phistar_l(y(l)) - \sum_{l=k+1}^{K} \phitil_l(y(l)) \geq   \phitil_k(y(k)).
\end{align*}
Then take infimum over $y(1),\dots,y(k-1), y(k+1),\dots,y(K)$, we have
\begin{align*}
    \phistar_k(y(k))  \geq \phitil_k(y(k)).
\end{align*}
This implies that $\phistar_1,\dots,\phistar_{k},\phitil_{k+1},\dots,\phitil_{K}$ achieves the supremum in \cref{eq:duality}. Thus by induction, we have shown that $\phistar_1,\dots,\phistar_{K}$ are admissible function and achieves the supremum in \cref{eq:duality}. Furthermore, since for any $k$, $\phi_k$ are infimum of linear functions, the Lipschitz constant for $\phi_k$ is thus upper bounded by 
\begin{align*}
    \sup_{\substack{y(1),\dots,y(k-1),\\y(k+1),\dots,y(K)\\
    \in \supp(\gmstar_A(\mu(1),\dots,\mu(K)))}} 2\norm*{\sum\limits_{l\neq k}a_{kl} y(l)} \leq  2 \eta_{K,d,A}\sqrt{K}.
\end{align*}
Therefore, 
    \begin{align*}
        \revindent[.5]\QMW(\muhat_n(1),\dots \muhat_n(1);A ) - \QMW(\mustar(1),\dots,\mustar(K);A) \\
        &= \sum\limits_{k=1}^{K} \int a_{kk} \norm{y(k)}^2 \prn*{\muhat_n(k,dy(k))  - \mustar(k,dy(k))} \\
        &\quad\quad+   \sup_{\substack{\phi_1,\dots,\phi_K:\\ \sum\limits_{k=1}^{K}\phi_k(y(k)) \leq \sum\limits_{i\neq j} a_{ij}y^\top(i)y(j) \\\phi_k: 2\sqrt{K}\eta_{K,d,A}\text{-Lip.} }} \sum\limits_{k=1}^{K} \int \phi_k(y(k)) \muhat_n(k,dy(k)) \\
        &\quad\quad-  \sup_{\substack{\phi_1,\dots,\phi_K:\\ \sum\limits_{k=1}^{K}\phi_k(y(k)) \leq \sum\limits_{i\neq j} a_{ij}y^\top(i)y(j) \\\phi_k: 2\sqrt{K}\eta_{K,d,A}\text{-Lip.}}} \sum\limits_{k=1}^{K} \int \phi_k(y(k)) \mustar(k,dy(k)) \\
        &\le  \sum\limits_{k=1}^{K} \int a_{kk} \norm{y(k)}^2  \prn*{\muhat_n(k,dy(k))  - \mustar(k,dy(k))} \\
        & \quad\quad+ \sup_{\substack{\phi_1,\dots,\phi_K:\\ \sum\limits_{k=1}^{K}\phi_k(y(k)) \leq \sum\limits_{i\neq j} a_{ij}y^\top(i)y(j) \\\phi_k: 2\sqrt{K}\eta_{K,d,A}\text{-Lip.} }} \sum\limits_{k=1}^{K} \int \phi_k(y(k)) (\muhat_n(k,dy(k)) - \mustar(k,dy(k)))\\
        &\le \abs*{\sum\limits_{k=1}^{K} \int a_{kk} \norm{y(k)}^2  \prn*{\muhat_n(k,dy(k))  - \mustar(k,dy(k))} } \\
        &\quad\quad+ \sum\limits_{k=1}^{K} \sup_{\phi_k\in \Flip[2\sqrt{K}\eta_{K,d,A}]} \int \phi_k(y(k))  \prn*{\muhat_n(k,dy(k))  - \mustar(k,dy(k))}. 
    \end{align*}
    Similarly we can show $-(\QMW(\muhat_n(1),\dots \muhat_n(1);A ) - \QMW(\mustar(1),\dots,\mustar(K);A))$ satisfies the same upper bound.
    By Hoeffding's inequality (Proposition 2.5 of \cite{wainwright2019high}), we have
    \begin{align}
        \En\abs*{\sum\limits_{k=1}^{K} \int a_{kk} \norm{y(k)}^2  \prn*{\muhat_n(k,dy(k))  - \mustar(k,dy(k))} } \lesssim  \prn*{\sum\limits_{k=1}^{K} |a_{kk}|} \cdot  (n/K)^{-1/2} .\label{ineq:lower-order-control}
    \end{align}
    Finally, by \pref{lem:expectation-radermacher}, we obtain
    \begin{align}\label{ineq:lip-part-control}
        \begin{split}
            \revindent[2]\En\brk*{\sum\limits_{k=1}^{K} \sup_{\phi_k\in \Flip[2\sqrt{K}\eta_{K,d,A}]} \int \phi_k(y(k))  \prn*{\muhat_n(k,dy(k))  - \mustar(k,dy(k))}} \\
            &\lesssim K^{3/2}\eta_{K,d,A}\cdot  
            \begin{cases} 
            (n/K)^{-2/d}, & \text{if } d > 4, \\
            (n/K)^{-1/2} \log(n/K), & \text{if } d = 4, \\
            (n/K)^{-1/2}, & \text{if } d < 4,
            \end{cases}.
        \end{split}
    \end{align}
    The bound from \cref{ineq:lower-order-control} is lower order compared to the bound of the mean concentration from \cref{lem:mean-concentration}. 
    In all, combine \pref{lem:mean-concentration}, \cref{ineq:random-control}, and \cref{ineq:lip-part-control}, we have
    \begin{align*}
        \begin{split}
            \revindent[0] \EE\left[\left|\QMW(\muhat_n(1),\dots \muhat_n(K);A ) - \QMW(\mustar(1),\dots,\mustar(K);A)\right|\right] \\
            &\lesssim K\sqrt{\sum\limits_{k=1}^{K}a_{kk}^2}   \cdot  n^{-1/2}  + \prn*{\sum\limits_{k=1}^{K} |a_{kk}|} \wedge
        \begin{cases} 
            \eta_{K,d,A}\cdot K^{3/2} (n/K)^{-2/d}, & \text{if } d > 4, \\
            \eta_{K,d,A}\cdot K^2 n^{-1/2} \log(n/K), & \text{if } d = 4, \\
            \eta_{K,d,A}\cdot K^2n^{-1/2}   , & \text{if } d < 4,
            \end{cases}
        \end{split}
    \end{align*}

\end{proof}

\begin{lemma}
    \label{lem:finite-support-eta-bound}
    For $n,K\geq 2$ and $d\geq 1$, for any $K$ mean zero marginals $\mu(1),\dots,\mu(K)$ finitely supported on $B_1^d(0)\subset \bR^d$, we have
    \begin{align*}
        \sup_{\substack{(y(1),\dots,y(K))\in\\ \supp \prn*{\gmstar(\mu(1),\dots,\mu(K)) }} } \sup_{l\in [K]} \norm*{\sum\limits_{k\neq l}  \frac{y(k)}{\sqrt{K}}  } \leq 4.
    \end{align*}
    $(y(1),\dots,y(K))$ in the support of the optimal coupling $\gmstar(\mu(1),\dots,\mu(K))$ with respect to the objective in \pref{eq:quadratic.objective}.
\end{lemma}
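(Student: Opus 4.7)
The plan is to exploit the fact that, for a discretely supported optimal coupling $\gmstar(\mu(1),\dots,\mu(K))$, any swap of a single coordinate between two support points (involving sufficiently small mass) must not decrease the $\ell_2$ objective; I then combine the resulting family of pairwise swap inequalities with the global bound on $\EE[\|S\|^2]$ for $S = \sum_k Y(k)/K$ obtained by comparing against the independent coupling.

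First, for any $y^*, y' \in \supp(\gmstar)$ and any $l \in [K]$, the perturbation that swaps their $l$-th coordinates preserves all the marginals $\mu(k)$ (since probabilities at $y^*$ and $y'$ are positive), so optimality forces
\begin{equation*}
    \langle a^*_l - a'_l,\; u^*_l - u'_l \rangle \leq 0,
\end{equation*}
where $a_l(y) \ldef \sum_{k \neq l} y(k)/K$ and $u_l(y) \ldef y(l)/K$. The key step is to sum this inequality over $l = 1, \ldots, K$. Expanding the four bilinear terms and using the identities $\sum_l \langle a_l(y), u_l(y)\rangle = \|s(y)\|^2 - K^{-2}\sum_l \|y(l)\|^2$ and $\sum_l \langle a_l(y^*), u_l(y')\rangle = \langle s^*, s'\rangle - K^{-2}\sum_l \langle y^*(l), y'(l)\rangle$ (where $s(y) \ldef \sum_k y(k)/K$) causes the bulk of the terms to cancel, leaving the remarkably tight inequality
\begin{equation*}
    \|s^* - s'\|^2 \;\leq\; \frac{1}{K^2} \sum_{l=1}^K \|y^*(l) - y'(l)\|^2.
\end{equation*}
Since $y(l), y'(l) \in B_1^d(0)$, the right-hand side is at most $4/K$, so the image of $\supp(\gmstar)$ under $s(\cdot)$ has diameter at most $2/\sqrt{K}$.

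Finally, because each $\mu(k)$ is mean-zero, comparing $\gmstar$ with the independent coupling (for which cross-terms vanish) gives $\EE_{\gmstar}[\|S\|^2] \leq K^{-2} \sum_k \EE_{\mu(k)}[\|Y(k)\|^2] \leq 1/K$. Hence some $y' \in \supp(\gmstar)$ satisfies $\|s(y')\|^2 \leq 1/K$. Combining with the diameter bound yields $\|s(y^*)\| \leq 3/\sqrt{K}$ for every $y^* \in \supp(\gmstar)$, and therefore
\begin{equation*}
    \Big\|\sum_{k \neq l} y^*(k)/\sqrt{K}\Big\| = \frac{1}{\sqrt{K}} \|K\, s(y^*) - y^*(l)\| \leq 3 + \frac{1}{\sqrt{K}} \leq 4
\end{equation*}
for all $K \geq 1$, which is the claim. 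The main obstacle is the cancellation in the summed swap inequality: individual swaps only give monotonicity in a single direction, and it is only after summing over $l$ that one obtains a quadratic bound with coefficient $K^{-2}$ rather than the $K^{-1}$ coefficient given by naive Cauchy--Schwarz. This factor-of-$K$ tightening is precisely what lets the pointwise $O(1/\sqrt{K})$ bound on $\|s\|$ propagate from the global $L^2$ estimate, and turns a swap argument into a uniform bound over the support.
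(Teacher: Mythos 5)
Your proof is correct and rests on the same core idea as the paper's: swap a single coordinate between two support points of the (necessarily finitely supported) optimal coupling, observe that optimality forces $\langle y^*(l)-y'(l),\sum_{k\neq l}(y^*(k)-y'(k))\rangle\le 0$, and sum over $l$ to obtain the quadratic bound $\|\sum_k(y^*(k)-y'(k))\|^2\le\sum_k\|y^*(k)-y'(k)\|^2$. The two proofs differ only in packaging: the paper runs a contradiction argument, defining the sets $\cA_1=\{y:\|\sum_k y(k)\|\le 2\sqrt K\}$ and $\cA_2=\{y:\|\sum_k y(k)\|\ge 4\sqrt K-1\}$, using the $L^2$ bound $\MW_2^2\le 1/K$ and Markov to show $\gmstar(\cA_1)>3/4$, picking one point from each set and deriving a contradiction; you instead read the summed swap inequality as a universal diameter bound on the image of $\supp(\gmstar)$ under $s(y)=\sum_k y(k)/K$, then shift by the single support point with $\|s\|\le 1/\sqrt K$ guaranteed by the same $L^2$ estimate. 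Your contradiction-free version is marginally cleaner and also repairs a small numerical slip in the paper (the paper bounds $\sum_k\|y_1(k)-y_2(k)\|^2$ by $2K$, but the correct bound from $\|y_i(k)\|\le 1$ is $4K$; your argument correctly uses $4K$ and still lands on the stated constant $4$).
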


\begin{proof}[\pfref{lem:finite-support-eta-bound}]
Denote the set $\cA_1 = \set*{y: \norm*{\sum\limits_{k=1}^{K} y(k)} \leq 2\sqrt{K} }   $ and $\cA_2 = \set*{y: \norm*{\sum\limits_{k=1}^{K} y(k)} \geq 4 \sqrt{K} -1 }$. 
We prove by contradition. Suppose otherwise, then we have $\gmstar(\cA_2) >0$. On the other hand, we note that the objective value is upper bounded by 
\begin{align*}
    \MW_2^2(\mu(1),\dots,\mu(K)) &= \int \left\| \frac{\sum_{k=1}^{K} y(k)}{K} \right\|_2^2  \gmstar(d y) \\
    &\leq \int \left\| \frac{\sum_{k=1}^{K} y(k)}{K} \right\|_2^2  \prod_{k=1}^K \mu(k, d y(k)),\\
    &=\frac{1}{K^2} \sum\limits_{k=1}^{K}   \int \norm{y(k)}^2  \mu(k, d y(k)) \leq \frac{1}{K}.
\end{align*}
Then by Markov's inequality, we have $\gmstar(\cA_1) > 3/4$. Since 
all the marginals are finitely supported, we can find $y_1\in \cA_1$ and $y_2\in \cA_2$ such that $\gmstar(y_1),\gmstar(y_2)>0$. Let $p = \gmstar(y_1)\wedge\gmstar(y_2)$.
For any $j\in [K]$, we consider $y_{1,j} = (y_1(1),\dots,y_1(j-1), y_2(j),y_1(j+1),\dots,y_1(K))$ and $y_{2,j} = (y_2(1),\dots,y_2(j-1), y_1(j),y_2(j+1),\dots,y_2(K))$. Then since the $K$ couplings $\gmstar - p (\delta_{y_1} + \delta_{y_2})+ p (\delta_{y_{1,j}} + \delta_{y_{2,j}}) $ for $j\in [K]$ are no better than the optimal coupling in the objective value, we thus have for $j\in [K]$,
\begin{align*}
 \norm*{\sum\limits_{k=1}^{K} y_1(k)}^2 +\norm*{\sum\limits_{k=1}^{K} y_2(k)}^2 \leq   \norm*{y_{2}(j)+ \sum_{k\neq j} y_1(k)}^2  +  \norm*{y_{1}(j)+ \sum_{k\neq j} y_2(k)}^2.
\end{align*}
Simplifying the expression, we obtain for any $j\in [K]$,
\begin{align*}
\inner*{ y_1(j) - y_2(j), \sum\limits_{k\neq j} (y_1(k) - y_2(k))  } \leq 0.
\end{align*}
This further implies
\begin{align*}
    \inner*{ y_1(j) - y_2(j), \sum\limits_{k} (y_1(k) - y_2(k))  } \leq \norm{y_1(j) - y_2(j)}^2.
\end{align*}
Summing up over $j\in [K]$, we have
\begin{align*}
\norm*{\sum\limits_{k} (y_1(k) - y_2(k)) }^2 \leq  \sum\limits_{k=1}^{K} \norm*{ (y_1(k) - y_2(k)) }^2 \leq 2K.
\end{align*}
However, we have
\begin{align*}
4\sqrt{K}-1 \leq \norm*{\sum\limits_{k=1}^{K}y_2(k) } \leq \norm*{\sum\limits_{k=1}^{K}y_1(k) } + \norm*{\sum\limits_{k=1}^K (y_1(k) - y_2(k)) } \leq (2+\sqrt{2})\sqrt{K}.
\end{align*}
This is a contradiction. Thus have $\gmstar(\cA_2) = 0$, which means 
\begin{align*}
    \sup_{\substack{(y(1),\dots,y(K))\in\\ \supp \prn*{\gmstar(\mu(1),\dots,\mu(K)) }} } \sup_{k\in [K]} \norm*{\sum\limits_{l\neq k}  \frac{y(l)}{\sqrt{K}}  } \leq 4.
\end{align*}

\end{proof}

\begin{lemma}[Theorem 7.7 of \cite{polyanskiy2022information}; \cite{strassen1965existence}]
    \label{lem:TV-achieving-coupling}
    Suppose $P$ and $Q$ are two distributions on the space $\cX$.
    Provided that the diagonal $\set{(x,x):x\in \cX}$ is measurable. 
    \begin{align*}
    \TV(P,Q) = \min_{P_{X,Y}}\set{  P_{X,Y}[X\neq Y]: P_X = P, P_Y=Q }
    \end{align*}
    where minimization is over the coupling distributions $P_{X,Y}$ with that $P_X = P$ and $P_Y = Q$.
\end{lemma}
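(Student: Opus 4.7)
The plan is to establish the two inequalities separately: a lower bound showing that every coupling must place at least $\TV(P,Q)$ mass on the off-diagonal, and a matching upper bound by exhibiting an explicit coupling (the maximal coupling) that achieves it.

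For the lower bound, I would fix an arbitrary coupling $P_{X,Y}$ with $P_X = P$ and $P_Y = Q$, and for any measurable set $A \subseteq \cX$ write
\begin{align*}
P(A) - Q(A) = P_{X,Y}[X \in A] - P_{X,Y}[Y \in A] = P_{X,Y}[X \in A, Y \notin A] - P_{X,Y}[X \notin A, Y \in A].
\end{align*}
The first term is bounded by $P_{X,Y}[X \neq Y]$ and the second is nonnegative, so $P(A) - Q(A) \leq P_{X,Y}[X \neq Y]$. Taking the supremum over $A$ yields $\TV(P,Q) \leq P_{X,Y}[X \neq Y]$, which establishes the ``$\leq$'' direction after taking infimum over couplings.

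For the upper bound, I would construct the maximal coupling explicitly. Let $\mu$ be a common dominating measure for $P$ and $Q$ (for example $\mu = P + Q$), with densities $p$ and $q$. Set $m = p \wedge q$ and note that $\int m\, d\mu = 1 - \TV(P,Q)$; define $\alpha = \int m\, d\mu$. Then sample as follows: with probability $\alpha$ draw $Z \sim m/\alpha$ and set $X = Y = Z$; with probability $1-\alpha$ draw $X \sim (p-m)/\TV(P,Q)$ and independently $Y \sim (q-m)/\TV(P,Q)$. A direct computation verifies that the marginals are $P$ and $Q$, and since the supports of $p-m$ and $q-m$ are essentially disjoint (on $\{p \geq q\}$ we have $q - m = 0$, and symmetrically), we obtain $P_{X,Y}[X = Y] = \alpha$, hence $P_{X,Y}[X \neq Y] = \TV(P,Q)$. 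The measurability of the diagonal $\{(x,x) : x \in \cX\}$, assumed in the hypothesis, is precisely what is needed to define the event $\{X = Y\}$ and conclude.

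The main technical point is verifying that the constructed measure is indeed a coupling with the correct marginals and that the diagonal event is well-defined; once those are in place, both inequalities follow quickly. I do not expect substantial obstacles beyond being careful with the dominating measure and the diagonal measurability assumption, which is why the lemma is stated under that condition. Since this is a standard result (Strassen/coupling lemma), the proof proposal would likely just cite \cite{polyanskiy2022information} or \cite{strassen1965existence} rather than reproduce the construction, but the above outline is the self-contained argument.
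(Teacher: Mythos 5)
Your proposal is correct. Note, however, that the paper does not prove this lemma at all: it simply cites Theorem 7.7 of Polyanskiy--Wu and Strassen's original paper, so there is no ``paper's own proof'' to compare against. What you have written is the standard self-contained argument: the set-decomposition $P(A)-Q(A)=P_{X,Y}[X\in A, Y\notin A]-P_{X,Y}[X\notin A, Y\in A]\le P_{X,Y}[X\ne Y]$ for the lower bound, and the explicit maximal coupling built from the densities $p,q$ with respect to $\mu=P+Q$ (mix with weight $\alpha=\int p\wedge q\,d\mu$ on the diagonal and weight $1-\alpha=\TV(P,Q)$ on the product of normalized residuals) for the matching upper bound. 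The marginal and disjoint-support checks you sketch are exactly what is needed, and the only edge case worth a sentence is $\TV(P,Q)=0$, where the residual branch has probability zero and the normalization is vacuous. Your remark that the diagonal-measurability hypothesis is what makes $\{X=Y\}$ an event is also the right observation; in the Polish-space setting of the references it holds automatically. In a final writeup one would, as you anticipate, simply cite the sources as the paper does, but the argument you give is a faithful and complete derivation of the cited result.
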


\begin{proof}[\pfref{thm:main-upper-bound}]
By applying \cref{prop:quadratic-form-upper-bound} with $A = \mathbbm{1}\mathbbm{1}^\top /K^2$, we first obtain the following convergence rate:
For any $n,d,K$, we have
\begin{align}\label{ineq:MOT.expectation.upper.bound.slow.rate}
\begin{split}
    \revindent[2] \EE\left[\left|\MW_2^2(\muhat_n(1),\dots \muhat_n(K) ) - \MW_2^2(\mustar(1),\dots,\mustar(K))\right|\right] \\
    &\lesssim n^{-1/2} + 
\prn*{1/K} \wedge
\begin{cases} 
    \eta_{K,d}\cdot K^{-1/2} (n/K)^{-2/d}, & \text{if } d > 4, \\
    \eta_{K,d}\cdot n^{-1/2} \log(n/K), & \text{if } d = 4, \\
    \eta_{K,d}\cdot n^{-1/2}  , & \text{if } d < 4,
    \end{cases}
\end{split}
\end{align}
where the notation $\lesssim$ hides constants that only depend on the dimension $d$. 
Moreover, the coefficient $\eta_{K,d}$ is defined as
\begin{align*}
\eta_{K,d} \ldef \sup_{\substack{\mu(1),\dots,\mu(K)\\\text{with zero mean} }} \hspace{3pt} \sup_{\substack{(y(1),\dots,y(K))\in\\ \supp \prn*{\gmstar(\mu(1),\dots,\mu(K)) }} } \sup_{l\in [K]} \norm*{\sum\limits_{k\neq l}  \frac{y(k)}{\sqrt{K}}  },
\end{align*}
where $\mu(1),\dots,\mu(K)$ are $d$-dimensional distributions with zero mean and $(y(1),\dots,y(K))$ in the support of the optimal coupling $\gmstar(\mu(1),\dots,\mu(K))$ with respect to the objective in \pref{eq:quadratic.objective}.
Moreover, since the support is bounded, by \cref{lem:finite-support-eta-bound}, we have, $\eta_{K,d} \leq \frac{K-1}{\sqrt{K}} \leq \sqrt{K}$. Let $N\geq n$ be an integer to be specified later. Let $ \muhat_N= (\muhat_N(1),\dots,\muhat_N(K))$ be the emprical distribution sampled from $\mustar$ where each margin has $N/K$ samples.
Then \cref{ineq:MOT.expectation.upper.bound.slow.rate} implies 
\begin{align*}
    \begin{split}
        \revindent[2] \EE\left[\left|\MW_2^2(\muhat_N(1),\dots \muhat_N(K) ) - \MW_2^2(\mustar(1),\dots,\mustar(K))\right|\right] \\
        &\lesssim N^{-1/2} + 
    \prn*{1/K} \wedge
    \begin{cases} 
        (N/K)^{-2/d}, & \text{if } d > 4, \\
        (N/K)^{-1/2} \log(N/K), & \text{if } d = 4, \\
        (N/K)^{-1/2}  , & \text{if } d < 4.
        \end{cases}
    \end{split}
\end{align*}
Let $\mubar_n= (\mubar_n(1),\dots,\mubar_n(K))$. The distribution $\mubar_n$ is the empirical distribution sampled from $\muhat_N$ with replacement where each margin has $n/K$ samples. Since $\muhat_N$ and $\mubar_n$ are both discretely supported, then rerun the proof of \cref{prop:quadratic-form-upper-bound} with \cref{lem:finite-support-eta-bound} gives the following convergence rate due to improved Lipschitz constant
\begin{align*}
\begin{split}
    \revindent[2] \EE_{\muhat_N, \mubar_n}\left[\left|\MW_2^2(\mubar_n(1),\dots,\mubar_n(K)) - \MW_2^2(\muhat_N(1),\dots \muhat_N(K) ) \right|\right]\\
    &\lesssim n^{-1/2} + 
\prn*{1/K} \wedge
\begin{cases} 
 K^{-1/2} (n/K)^{-2/d}, & \text{if } d > 4, \\
 n^{-1/2} \log(n/K), & \text{if } d = 4, \\
 n^{-1/2}  , & \text{if } d < 4,
    \end{cases}
\end{split}
\end{align*}
 Finally, we consider the variables  $\mutil_n= (\mutil_n(1),\dots,\mutil_n(K))$, which is the emprical distribution sampled from $\muhat_N$ without replacement where each margin has $n/K$ samples. 
We couple $\mutil_n$ and $\mubar_n$ conditioned on $\muhat_N$ by the $\TV$ distance achieving coupling $P_{\mutil_n, \mubar_n}$ as defined in \cref{lem:TV-achieving-coupling}. 
Thus in all, we have
\begin{align*}
    \revindent[.2]\EE_{\muhat_n}\left[\left|\MW_2^2(\muhat_n(1),\dots \muhat_n(K) ) - \MW_2^2(\mustar(1),\dots,\mustar(K))\right|\right] \\
    &=    \EE_{\mutil_n}\left[\left|\MW_2^2(\mutil_n(1),\dots \mutil_n(K) ) - \MW_2^2(\mustar(1),\dots,\mustar(K))\right|\right] \\
    &\leq \EE_{\muhat_N, \mutil_n, \mubar_n}\left[\left|\MW_2^2(\muhat_N(1),\dots \muhat_N(K) ) - \MW_2^2(\mustar(1),\dots,\mustar(K))\right|\right] \\
    &\quad + \EE_{\muhat_N, \mutil_n, \mubar_n}\left[\left|\MW_2^2(\muhat_N(1),\dots \muhat_N(K) ) - \MW_2^2(\mubar_n(1),\dots,\mubar_n(K))\right|\right] \\
    &\quad + \EE_{\muhat_N, \mutil_n, \mubar_n}\left[\left|\MW_2^2(\mubar_n(1),\dots,\mubar_n(K)) - \MW_2^2(\mutil_n(1),\dots \mutil_n(K) ) \right|\right] \\
    &\lesssim  \TV(\mubar_n, \mutil_n) + N^{-1/2} + n^{-1/2} + 
    \begin{cases} 
        (N/K)^{-2/d} + K^{-1/2} (n/K)^{-2/d}, & \text{if } d > 4, \\
        (N/K)^{-1/2} \log(N/K) + n^{-1/2} \log(n/K), & \text{if } d = 4, \\
        (N/K)^{-1/2} +n^{-1/2} , & \text{if } d < 4,
    \end{cases}
\end{align*}
where the first equality is due to the fact that $\muhat_n$ and $\mutil_n$ is equal in distribution, the second inequality is by triangle inequality and the third inequality is by combing the aforementioned rates. Now we finally show $\TV(\mubar_n, \mutil_n)$ is $o(1)$ as $N\to \infty$ and then take $N\to \infty$ to obtain the desired result.
Concretely, let event $\cE_{N,n,K} = \set{\text{no points in $\muhat_N$ are sampled twice when sampling $\mubar_n$ with replacement}}$. Then we have $\mubar_n|\cE_{N,n,K} = \mutil_n$. Thus,
\begin{align*}
    \TV(\mubar_n, \mutil_n) \leq  1-\bP_{\mubar_n}(\cE_{N,n,K})  = o(1), \quad \text{as }N\to \infty.
\end{align*}
This concludes our proof.

\end{proof}

\begin{proof}[\pfref{prop:quadratic-form-upper-bound-simplified}]
    This is a direct corollary of \cref{prop:quadratic-form-upper-bound}, by the upper bound of 
    \begin{align*}
        \eta_{K,d,A} \lesssim \norm*{A}_{1,\infty}/\sqrt{K}.
    \end{align*}
\end{proof}

\subsection{Lower bound}
\label{app:lower-bound}

\begin{proof}[\pfref{prop:minimax-lower-bound}]
We only need to prove for $d=1$. 
For any $K\geq 2$, if $K$ is even, then
consider $\mu(1)= \dots = \mu(K) = Ber(1/2) $ and $\mu'(1) = \dots = \mu'(K) = Ber(1/2+\veps)$ with some $\veps>0$ to be decided later. If $K$ is odd, then set the $\mu(K) = \mu'(K)$ to be the point mass distribution on $0$ and we reduce back to the even case. 
We first show that $\MW_2^2(\mu(1),\dots, \mu(K)) = 1/4$ and $\MW_2^2(\mu'(1),\dots, \mu'(K)) \geq 1/4 + \veps$. Indeed, by \Cref{lem:mean-decomposition}, we have
\begin{align*}
    \MW_2^2(\mu(1),\dots, \mu(K)) \geq  \prn*{ \frac{\sum_{k=1}^{K} \EE_{\mu(k)} [X]}{K}}^2 = 1/4
\end{align*} 
and 
\begin{align*}
    \MW_2^2(\mu'(1),\dots, \mu'(K)) \geq \prn*{ \frac{\sum_{k=1}^{K} \EE_{\mu(k)} [X]}{K} }^2 \geq  1/4 + \veps.
\end{align*}
Meanwhile, for $\mu(1),\dots, \mu(K)$ the coupling $\gamma$ that assigns $Y(1) = Y(3) = \dots = Y(K-1)$ and $Y(2) = Y(4) = \dots = Y(K) = 1- Y(1)$ is an admissible coupling that achieves
\begin{align*}
\int \prn*{ \sum\limits_{k=1}^{K} y(K)/K }^2  \gamma(dy) = 1/4.
\end{align*}
Thus we have shown $\MW_2^2(\mu(1),\dots, \mu(K)) = 1/4$ and $\MW_2^2(\mu'(1),\dots, \mu'(K)) \geq 1/4 + \veps$.
Let $\bP$ be the distribution on the observations $Y_i(1),\dots,Y_i(K)$ with $\mustar(1),\dots, \mustar(K) = \mu(1),\dots, \mu(K)$ and $\bP'$ be the distribution on the observations $Y_i(1),\dots,Y_i(K)$ with $\mustar(1),\dots, \mustar(K) = \mu'(1),\dots, \mu'(K)$. Then we have
\begin{align*}
\TV(\bP,\bP') \lesssim \sqrt{ \KL (\bP,\bP') } \lesssim \sqrt{n \cdot\KL (Ber(1/2),Ber(1/2+\veps)) } \lesssim \sqrt{n\veps^2}.
\end{align*}
Thus by choosing $\veps = c\cdot n^{-1/2})$ for constant $c$ small enough, we can make $\TV(\bP,\bP') \leq 1/2$ while $\MW_2^2(\mu'(1),\dots, \mu'(K)) - \MW_2^2(\mu(1),\dots, \mu(K)) \gtrsim \Omega(n^{-1/2})$. Then we apply Le Cam's two-point method (Lemma 1 of \cite{yu1997assouad}) and obtain a lower bound of $n^{-1/2}$. Furthermore, since our problem has $K$ margins, it is easy to embed the 1-Wasserstein distance estimation problem in the first two margins. Thus, by the lower bound of Wasserstein distance estimation from theorem 22 of \cite{manole2024sharp}, we obtain a lower bound of $K^{-2}(n/K)^{-2/d}$ when $d>4$.
\end{proof}

\section{Algorithms from Section \ref{sec:empirical}} \label{app:algorithm}

In this section, we introduce the MOT Greenkhorn algorithm and MOT Sinkhorn algorithm that we implemented. We then specify the following three properties we use regarding the MOT Sinkhorn algorithm: (1) The convergence guarantee. (2) The application of log-exp-sum trick. (3) The lower bounds without convergence.

For any $\gamma$ a $K$-marginal tensor, define 
\begin{align*}
\MTV(\gamma, \mu(1),\dots,\mu(K)) \ldef \sum\limits_{k=1}^{K} \int_{y(k)} |\gamma(k,dy(k))- \mu(k,dy(k))|,
\end{align*}
where $\gamma(k, dy(k)) = \int_{y(1),...,y(k-1),y(k+1),...,y(K)} \gamma(dy(1),...,dy(K) ) $ which might be a signed measure in general.

\paragraph{Multi-marginal Greenkhorn}
We follow \cite{altschuler2017near} and introduce the algorithm of MOT Greenkhorn. The function $\rho(a,b)$ in \pref{line:greedy} of \Cref{alg:greenkhorn} is defined as $\rho(a,b) = b-a+a\log (a/b)$.

\begin{algorithm}[H]
\caption{Multi-marginal Greenkhorn}
\label{alg:greenkhorn}
\begin{algorithmic}[1]
\Require Cost matrix $C$, Multiple marignals $\mu(1)$, $\mu(2)$, \dots,$\mu(K)$, Accuracy $\veps>0$. 
\State $\eta \lto  4 \sum\nolimits_{k=1}^{K}\log n_k/\veps$, $\veps'\lto \frac{\veps}{8 \norm{C}_\infty}$, $A\lto \exp(-\eta C)$.
\State $m(k) \lto \vz\in \bR^{n_k}$ for $k\in [K]$, and $\gamma \lto A/\norm{A}_1$. \label{line:init}
\While{$\MTV(\gamma, \mu(1),...,\mu(K) )) > \veps'$} \label{line:while-loop-condition}
\smskip
\State $(k^\star, i^\star) \lto \argmax_{k\in [K],i\in[n_k]} \rho(\mu(k,i), \gamma(k,i))$. \label{line:greedy}
 \Comment{$\mu (k,i)$ is $1/n_k$ in our case.}
\smskip
\State $m(k^\star, i^\star)  \lto m(k^\star,i^\star) + \log \mu(k^\star,i^\star) - \log \gamma (k^\star,i^\star) $. \label{line:Greenkhorn-m}
\smskip
\State \multiline{$\gamma(i_1,\dots,i_K)= A(i_1,\dots,i_K) \cdot \prod_{k=1}^K \exp(m(k,i_k)) $ for all $i_1,\dots,i_K\in [n_1]\times\cdots\times[n_K]$.} 
\EndWhile \label{line:endwhile}
\State Let $\gmhat^{(0)} = \gamma$.
\For{$k=1,...,K$}
\State $v(k) \lto (\mu(k)/\gmhat^{(k-1)}(k)) \wedge \mathbbm{1}$, where the division and $\wedge$ are elementwise operation.
\State $\gmhat^{(k)}(i_1,\dots,i_K) =  \gmhat^{(k-1)}(i_1,\dots,i_K) \cdot v(k,i_k)$ for all $i_1,\dots,i_K\in [n_1]\times\cdots\times[n_K]$.
\EndFor
\For{$k=1,...,K$}
\State $\err_k \lto \mu(k) -  \gmhat^{(K)}(k) $.
\EndFor
\State \Return $\gmhat(i_1,...,i_K) \lto \gmhat^{(K)}(i_1,...,i_K) +  \prod_{k=1}^K\err_k(i_k)/ \norm{\err_K}_1^{K-1}$ for all $i_1,...,i_K$.   
\end{algorithmic}
\end{algorithm}

\paragraph{Multi-marginal Sinkhorn} We use the Multi-marginal Sinkhorn algorithm as in \Cref{alg:main-alg}
.
\algnewcommand{\IfThen}[2]{%
  \State \algorithmicif\ #1\ \algorithmicthen\ #2}

\begin{algorithm}[!htp]
\caption{Multi-marginal Sinkhorn}
\label{alg:main-alg}
\begin{algorithmic}[1]
    \setstretch{1.1}
\State \textbf{Parameters}: Cost tensor $C\in \bR^{n_1\times \dots \times n_k}$, Marignals $\mu(1),\dots,\mu(K)$, Targeted accuracy $\veps>0$. 
\State Let $\eta = \prn{4 \sum_{k=1}^{K}\log n_k}/\veps$, $\veps' = \veps/(8\norm{C}_\infty)$, and $A = \exp(-\eta C)$ .
\State Let $t=0$, $m^{(0)}(k) = \vz \in\bR^{n_k}$ for $k\in [K]$, and $\gamma^{(0)}= A/\norm{A}_1$. 
\For{$t=1,2,\dots$}
\State Let $k_t = \argmax_k \sum_{i_k\in [n_k]} \mu(k,i_k)\log ( \mu(k,i_k)/\gamma^{(t-1)}(k,i_k) )  $.
\State $m^{(t)}(k_t) = m^{(t-1)}(k_t) + \log \mu(k_t) - \log \gamma^{(t-1)}(k_t)$. \label{line:updating-m}
\State \multiline{$\gamma^{(t)}(i_1,\dots,i_K)= A(i_1,\dots,i_K) \cdot \prod_{l=1}^k \exp(m^{(t)}(l,i_l)) \prod_{l=k+1}^{K} \exp(m^{(t-1)}(l,i_l)) $ for all $i_1,\dots,i_K\in [n_1]\times\cdots\times[n_K]$.} \label{line:log-sum-exp}
\IfThen{ $\MTV(\gamma^{(t)}, \mu(1),\dots,\mu(K)) \leq \veps'$ }{\textbf{break}}
\EndFor
\State Let $\gmhat^{(0)} = \gamma^{(t)}$.
\For{$k=1,...,K$}
\State $v(k) \lto (\mu(k)/\gmhat^{(k-1)}(k)) \wedge \mathbbm{1}$, where the division and $\wedge$ are elementwise operation.
\State $\gmhat^{(k)}(i_1,\dots,i_K) =  \gmhat^{(k-1)}(i_1,\dots,i_K) \cdot v(k,i_k)$ for all $i_1,\dots,i_K\in [n_1]\times\cdots\times[n_K]$.
\EndFor
\For{$k=1,...,K$}
\State $\err_k \lto \mu(k) -  \gmhat^{(K)}(k) $.
\EndFor
\State \Return $\gmhat(i_1,...,i_K) \lto \gmhat^{(K)}(i_1,...,i_K) +  \prod_{k=1}^K\err_k(i_k)/ \norm{\err_K}_1^{K-1}$ for all $i_1,...,i_K$.   
\end{algorithmic}
\end{algorithm}

The guarantees that are interesting for us are shown in the following:

\subsection{Convergence}
\begin{theorem}[Theorem 16 of \cite{lin2022complexity}]
    When \Cref{alg:main-alg} converges, we have
    \begin{align*}
        \inf_{\gamma\in \Gamma(\mu(1),...,\mu(K)) } \int_y C(y) \gamma(d y) \geq \int_y C(y) \gmhat(d y)  -\veps.
    \end{align*}
\end{theorem}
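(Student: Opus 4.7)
The plan is to follow the standard Sinkhorn-type analysis adapted to the multi-marginal setting, decomposing the approximation error into three contributions that match the three parts of \Cref{alg:main-alg}: (i) the entropic regularization bias; (ii) the approximate marginal feasibility at termination; and (iii) the final rounding step that turns an approximately feasible tensor into a genuine coupling. The parameters are tuned so each contribution is at most $\veps/4$: the regularization strength $\eta = 4\sum_k \log n_k / \veps$ controls the entropic bias, and the termination tolerance $\veps' = \veps/(8\norm{C}_\infty)$ controls both the primal-dual slackness and the rounding error through $\norm{C}_\infty$.

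First, each iterate has the Gibbs form $\gamma^{(t)}(i_1,\ldots,i_K) \propto \exp(-\eta C(i_1,\ldots,i_K)) \prod_{k} \exp(m^{(t)}(k, i_k))$, so the vectors $m^{(t)}(k)$ play the role of dual potentials for the entropic MOT problem $\min_\gamma \langle C, \gamma\rangle + H(\gamma)/\eta$ under marginal constraints $\mu(1),\dots,\mu(K)$. A standard entropy-bias estimate then shows that for any optimal coupling $\gamma^\star$ of the unregularized MOT, the entropic optimum $\gamma_{\mathrm{ent}}^\star$ satisfies $|\langle C, \gamma_{\mathrm{ent}}^\star\rangle - \langle C, \gamma^\star\rangle| \leq (\sum_k \log n_k)/\eta = \veps/4$, since the Shannon entropy of any coupling of $K$ discrete marginals of sizes $n_k$ lies in $[0, \sum_k \log n_k]$.

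Second, when the while-loop exits at iteration $t$, we have $\MTV(\gamma^{(t)}, \mu(1),\dots,\mu(K)) \leq \veps'$. Using weak duality for the entropic problem together with the fact that $m^{(t)}$ is a feasible dual point, one obtains an inequality of the form $\langle C, \gamma^{(t)}\rangle \leq \langle C, \gamma_{\mathrm{ent}}^\star\rangle + 2\norm{C}_\infty \cdot \MTV(\gamma^{(t)}, \mu) \leq \langle C, \gamma_{\mathrm{ent}}^\star\rangle + \veps/4$, where the slackness coming from approximate primal feasibility is linear in $\MTV$ via $\norm{C}_\infty$. Third, the post-termination loops implement a multi-marginal analogue of the Altschuler--Weed--Rigollet rounding: they peel off the margins one at a time (producing $\gmhat^{(K)}$ whose marginals are dominated componentwise by $\mu(k)$) and then add the rank-$K$ correction $\prod_k \err_k(i_k)/\norm{\err_K}_1^{K-1}$ to make all marginals exactly equal to $\mu(k)$. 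A telescoping $\ell_1$ argument yields $\norm{\gmhat - \gamma^{(t)}}_1 \leq 2\,\MTV(\gamma^{(t)}, \mu)$, hence $|\langle C, \gmhat\rangle - \langle C, \gamma^{(t)}\rangle| \leq 2\norm{C}_\infty \veps' \leq \veps/4$.

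Summing the three estimates gives $\langle C, \gmhat\rangle \leq \langle C, \gamma^\star\rangle + \veps$, and since $\gmhat$ is itself a valid element of $\Gamma(\mu(1),\dots,\mu(K))$ while $\gamma^\star$ attains the infimum, this is exactly the inequality in the theorem. The main obstacle will be step three: unlike the two-marginal case, one must verify that the sequential projections $\gmhat^{(k)}$ do not accumulate $\ell_1$ error with $K$, and that the rank-$K$ correction term $\prod_k \err_k(i_k)/\norm{\err_K}_1^{K-1}$ (which naively could blow up in $K$ because of the product structure) in fact has total mass equal to $\norm{\err_K}_1$ and therefore contributes at most $\MTV$-sized error. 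Once this rounding lemma is in place, the assembly of the three bounds is routine.
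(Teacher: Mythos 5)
The paper does not actually prove this statement: it is quoted verbatim as Theorem~16 of \cite{lin2022complexity}, so there is no in-paper proof for your attempt to be measured against. Your sketch nonetheless captures the standard entropic-plus-rounding analysis that underlies the cited result: the split into entropic bias, marginal slackness at termination, and rounding error; the tuning $\eta = 4\sum_k\log n_k/\veps$ and $\veps' = \veps/(8\norm{C}_\infty)$ so that each piece is $O(\veps)$; and the total-mass computation showing the rank-$K$ correction $\prod_k\err_k(i_k)/\norm{\err_K}_1^{K-1}$ has $\ell_1$ mass exactly $\norm{\err_K}_1$, so it does not blow up in $K$. That is indeed essentially how the theorem is proven in the source.

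One step in your middle estimate would need tightening. You bound $\langle C,\gamma^{(t)}\rangle \leq \langle C,\gamma_{\mathrm{ent}}^\star\rangle + 2\norm{C}_\infty\,\MTV(\gamma^{(t)},\mu)$ ``by weak duality'', but $\gamma^{(t)}$ and $\gamma_{\mathrm{ent}}^\star$ have different marginals, so this is not a direct consequence of the dual feasibility of $m^{(t)}$. The standard repair introduces a proxy coupling $\gamma''$ sharing the (slightly wrong) marginals of $\gamma^{(t)}$, obtained by rounding the unregularized optimum into $\Gamma(\gamma^{(t)}(1),\dots,\gamma^{(t)}(K))$ with $\norm{\gamma''-\gamma^\star}_1 \leq 2\,\MTV(\gamma^{(t)},\mu)$. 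Since $\gamma^{(t)}$ is the entropic minimizer for its own marginals, $\langle C,\gamma^{(t)}\rangle \leq \langle C,\gamma''\rangle + (\sum_k\log n_k)/\eta$, and then $\langle C,\gamma''\rangle \leq \langle C,\gamma^\star\rangle + 2\norm{C}_\infty\,\MTV$. With this correction the three contributions still sum to at most $3\veps/4 \leq \veps$, so the conclusion and your overall architecture are sound; only the intermediate comparison object needs to be the same-marginal proxy rather than $\gamma_{\mathrm{ent}}^\star$.
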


\subsection{Application of log-sum-exp trick}

\paragraph{Log-sum-exp trick} Let $f:\cX\to \bR$ be any function with $\cX$ being any finite set.  
The celebrated log-sum-exp trick is used for computing quantities of the type $\log \sum_{x\in \cX} \exp (f(x))$. It computes the equivalent form of $\max_x f(x) +  \log\sum_{x} \exp (f(x) - \max_{x'} f(x'))$, which is much more stable computationally since $\sum_{x} \exp (f(x) - \max_{x'} f(x'))\geq 1$ and $f(x) - \max_{x'} f(x') \leq 0$ for all $x\in \cX$.

In the implementation of \Cref{alg:main-alg}, the log-sum-exp trick can be apply to compute the update of $m^{(t)}(k)$:
For \Cref{alg:main-alg} specifically, the computation in \pref{line:log-sum-exp} is unstable when $\veps$ is chosen small because it involves exponential to the power of magnitude $\Omega(\eta) = \Omega(1/\veps)$. 
We can apply the celebrated log-sum-exp trick to stabilize the algorithm by skipping \pref{line:log-sum-exp} and updating $m^{(t)}(k_{t})$ in \pref{line:updating-m} for the next timestep by
\begin{align*}
m^{(t)}(k_{t}) &= m^{(t-1)}(k_{t}) + \log \mu(k_{t}) &\\
&- \log \sum\limits_{\substack{i_1,\dots,i_{k_{t}-1}\\i_{k_{t}+1},\dots,i_K}} \exp \prn*{ -\eta C(i_1,\dots,i_{k_{t}-1},\cdot, i_{k_{t}+1},\dots,i_K)  + \sum\limits_{l\neq k_t} m^{(t-1)}(l,i_l)}, 
\end{align*}
where the final term is executed using the log-sum-exp trick.

\subsection{Lower bounds without convergence}

\begin{lemma}[Section 4.4 of \cite{peyre2019computational}]
    For any $t\geq 1$, \Cref{alg:main-alg} satisfies
    \begin{align*}
        \inf_{\gamma\in \Gamma(\mu(1),...,\mu(K)) } \int_y C(y) \gamma(d y) \geq \frac{1}{\eta} \sum_{k=1}^K \int_{y(k)} m^{(t)}(y(k)) \mu(k,dy(k)).
    \end{align*}
\end{lemma}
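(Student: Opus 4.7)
The plan is to recognize $m^{(t)}/\eta$ as an admissible approximate dual potential for the MOT problem and deduce the lower bound by combining a pointwise logarithmic identity with Gibbs' inequality. I will adopt the convention that $m^{(t)}(l) \ldef m^{(t-1)}(l)$ for any margin $l$ not updated at step $t$, so that line~7 of \Cref{alg:main-alg} reads
\[
    \gamma^{(t)}(i_1,\dots,i_K) = A(i_1,\dots,i_K)\prod_{l=1}^K \exp\!\bigl(m^{(t)}(l,i_l)\bigr) = \exp\!\Bigl(-\eta\, C(i_1,\dots,i_K) + \sum_{l=1}^K m^{(t)}(l,i_l)\Bigr).
\]
Since $A=\exp(-\eta C)$ is strictly positive entrywise, so is $\gamma^{(t)}$, and taking logs gives the pointwise identity
\[
    C(i_1,\dots,i_K) \;=\; \frac{1}{\eta}\sum_{l=1}^K m^{(t)}(l,i_l) \;-\; \frac{1}{\eta}\log \gamma^{(t)}(i_1,\dots,i_K).
\]

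Next I would integrate this identity against an arbitrary $\gamma \in \Gamma(\mu(1),\dots,\mu(K))$. Because the $l$-th marginal of $\gamma$ equals $\mu(l)$, each dual term collapses to $\int m^{(t)}(l,y(l))\,\mu(l,dy(l))$, yielding
\[
    \int_y C(y)\,\gamma(dy) \;=\; \frac{1}{\eta}\sum_{l=1}^K \int m^{(t)}(l,y(l))\,\mu(l,dy(l)) \;-\; \frac{1}{\eta}\int \log \gamma^{(t)}(y)\,\gamma(dy).
\]
It then suffices to show that the residual $-\tfrac{1}{\eta}\int \log \gamma^{(t)}\,d\gamma$ is non-negative. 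The multiplicative update in line~6 of \Cref{alg:main-alg} enforces that the $k_t$-th marginal of $\gamma^{(t)}$ exactly equals $\mu(k_t)$ (verified by a direct marginalization in which the exponential factors $\exp(m^{(t)}(k_t,\cdot)-m^{(t-1)}(k_t,\cdot))$ cancel the old $k_t$-marginal of $\gamma^{(t-1)}$); consequently $\gamma^{(t)}$ has unit total mass and is a bona fide probability distribution on the discrete grid. Gibbs' inequality $\KL(\gamma\|\gamma^{(t)})\geq 0$ then yields
\[
    -\int \log \gamma^{(t)}\,d\gamma \;\geq\; -\int \log \gamma\,d\gamma \;=\; H(\gamma) \;\geq\; 0,
\]
where the last step uses that $\gamma$ is a probability mass function on a finite set. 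Taking the infimum over $\gamma \in \Gamma(\mu(1),\dots,\mu(K))$ closes the argument.

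I do not anticipate a significant obstacle: the entire proof is a short weak-duality computation with the Sinkhorn iterate playing the role of an approximate Gibbs kernel. The one delicate point is the unit-mass verification for $\gamma^{(t)}$, which requires that the bound be read off after both lines~6 and~7 have been executed at step $t$; this is handled by the marginalization identity sketched above. A secondary routine check is that every logarithm is well-defined, which is guaranteed by the strict positivity $A = \exp(-\eta C) > 0$ inherited by $\gamma^{(t)}$.
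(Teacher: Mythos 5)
Your argument is correct and is essentially the standard weak-duality computation for entropic (multi-marginal) Sinkhorn that the cited reference (Peyr\'e--Cuturi, Sec.~4.4) sketches; the paper does not spell out its own proof, so there is no alternative derivation to contrast against. Two small remarks. First, your closing step is a minor detour: once you have established that $\gamma^{(t)}$ is a probability mass function (which you do correctly, by the marginalization identity showing $\gamma^{(t)}(k_t,\cdot)=\mu(k_t,\cdot)$), the non-negativity of $-\frac{1}{\eta}\int\log\gamma^{(t)}\,d\gamma$ follows directly from the pointwise bound $\gamma^{(t)}(y)\le 1$, without invoking Gibbs' inequality and the non-negativity of $H(\gamma)$ as a chain. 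Second, it is worth noting explicitly that the well-definedness of $\log\gamma^{(t-1)}(k_t,\cdot)$ in the update of line~6 (hence the induction starting from $\gamma^{(0)}=A/\|A\|_1$) also rests on the strict positivity $A>0$ that you invoke at the end; this closes the induction that $\gamma^{(t)}$ has unit mass for all $t\ge 1$. With those points made, the proof is complete and matches the intended argument.
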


\section{Description of datasets}\label{sec:datasets}

\subsection{Epitaxial layer growth data}
In fabricating integrated circuit (IC) devices, an epitaxial layer is grown on polished silicon wafers mounted on a six-faceted susceptor within a metal bell jar. Chemical vapors are injected and heated until the layer reaches a desired thickness. This dataset originates from a factorial design experiment investigating the impact of various factors to the thickness (outcome). The experiment data we use considers two treatments: susceptor rotation method and nozzle position, resulting in a maximum of four margins.

\subsection{Education data}
The Student Achievement and Retention Project (Project STAR) is a randomized evaluation of academic services and incentives at one of the satellite campuses of a large Canadian university. The STAR demonstration randomly assigned entering first-year undergraduates to one of four arms: a service strategy known as the Student Support Program (SSP), an incentive strategy known as the Student Fellowship Program (SFP), an intervention offering both known as the SFSP, and a control strategy offering neither of the two programs. As a result, there are four margins in total. Several outcomes are of interest and analyzed: the GPA at the end of year 1, the GPA at the end of year 2, whether the student is in good standing at the end of year 1,whether the student is in good standing at the end of year 2, grade of the first semester in the first year.

\subsection{Helpfulness data}
The study is a randomized experiment to learn subjects with uniform experience are more susceptible to self-serving motivated reasoning in both their empathy beliefs and redistribution choices. This experiment employs a 2-by-2 design with two treatments: experience variation—subjects either have varied experience or uniform experience; timing of belief elicitation—time to tell the participant that there is a chance to take on a partner’s workload (corresponding to wealth redistribution). As a result, there are four margins. The outcome is the willingness of sharing the workload/redistributing the wealth.

\end{document}